\colorlet{shadecolor}{yellow}
\newtheorem{definition}{Definition} 
\newtheorem{theorem}{Theorem}
\newtheorem{assumption}{Assumption}
\begin{document}
\bstctlcite{IEEEexample:BSTcontrol}
\title{Controllability Test for Nonlinear Datatic Systems}
\author{
    \IEEEauthorblockN{Yujie Yang, Letian Tao, Likun Wang, Shengbo Eben Li
    \thanks{This study is supported by National Key R\&D Program of China with 2022YFB2502901 and NSF China under 52221005. Y. Yang and L. Tao contributed equally to this work. All correspondence should be sent to S. E. Li with email: lishbo@tsinghua.edu.cn.}
    \thanks{Y. Yang, L. Tao, L. Wang, and S. E. Li are with the School of Vehicle and Mobility, Tsinghua University, Beijing, 100084, China. Email: \{yangyj21, tlt22, wlk23\}@mails.tsinghua.edu.cn, lishbo@tsinghua.edu.cn.}}
}

% ====================================================================
\maketitle

% The paper headers
% \markboth{IEEE TRANSACTIONS ON MICROWAVE THEORY AND TECHNIQUES, VOL.~60, NO.~12, DECEMBER~2012
% }{Roberg \MakeLowercase{\textit{et al.}}: High-Efficiency Diode and Transistor Rectifiers}

% === ABSTRACT ====================================================================
% =================================================================================
\begin{abstract}
Controllability is a fundamental property of control systems, serving as the prerequisite for controller design. While controllability test is well established in modelic (i.e., model-driven) control systems, extending it to datatic (i.e., data-driven) control systems is still a challenging task due to the absence of system models. In this study, we propose a general controllability test method for nonlinear systems with datatic description, where the system behaviors are merely described by data. In this situation, the state transition information of a dynamic system is available only at a limited number of data points, leaving the behaviors beyond these points unknown. Different from traditional exact controllability, we introduce a new concept called $\epsilon$-controllability, which extends the definition from point-to-point form to point-to-region form. Accordingly, our focus shifts to checking whether the system state can be steered to a closed state ball centered on the target state, rather than exactly at that target state. Given a known state transition sample, the Lipschitz continuity assumption restricts the one-step transition of all the points in a state ball to a small neighborhood of the subsequent state. This property is referred to as one-step controllability backpropagation, i.e., if the states within this neighborhood are $\epsilon$-controllable, those within the state ball are also $\epsilon$-controllable. On its basis, we propose a tree search algorithm called maximum expansion of controllable subset (MECS) to identify controllable states in the dataset. Starting with a specific target state, our algorithm can iteratively propagate controllability from a known state ball to a new one. This iterative process gradually enlarges the $\epsilon$-controllable subset by incorporating new controllable balls until all $\epsilon$-controllable states are searched. Besides, a simplified version of MECS is proposed by solving a special shortest path problem, called Floyd expansion with radius fixed (FERF). FERF maintains a fixed radius of all controllable balls based on a mutual controllability assumption of neighboring states. The effectiveness of our method is validated in three datatic control systems whose dynamic behaviors are described by sampled data.
\end{abstract}

% === KEYWORDS ====================================================================
% =================================================================================
% \begin{IEEEkeywords}
% Controllability, Datatic Systems
% \end{IEEEkeywords}

% For peer review papers, you can put extra information on the cover
% page as needed:
% \ifCLASSOPTIONpeerreview
% \begin{center} \bfseries EDICS Category: 3-BBND \end{center}
% \fi
%
% For peerreview papers, this IEEEtran command inserts a page break and
% creates the second title. It will be ignored for other modes.
% \IEEEpeerreviewmaketitle

% === I. INTRODUCTION =============================================================
% =================================================================================
\section{Introduction}
Feedback control plays a critical role in modern industry sectors, such as power electronic systems, chemical processes, and road transportation.
System analysis and controller synthesis are two principal tasks in feedback control systems.
The former involves studying inherent properties in plant dynamics, such as controllability, observability, and stability, while the latter is about designing an online controller to ensure that the closed-loop system exhibits desired behaviors.
As achieving desired behaviors depends on specific plant structures, excellent controller design requires an in-depth comprehension of some inherent properties of system dynamics.
Controllability, as a fundamental property of control systems, describes the system's ability to be steered from an initial state to an arbitrary final state in a finite time.
Before controller design, a proper controllability test should be performed to check whether the system has the required state transfer capability.

\begin{figure*}[!htbp]
  \begin{center}
  \includegraphics[width=0.7\linewidth]{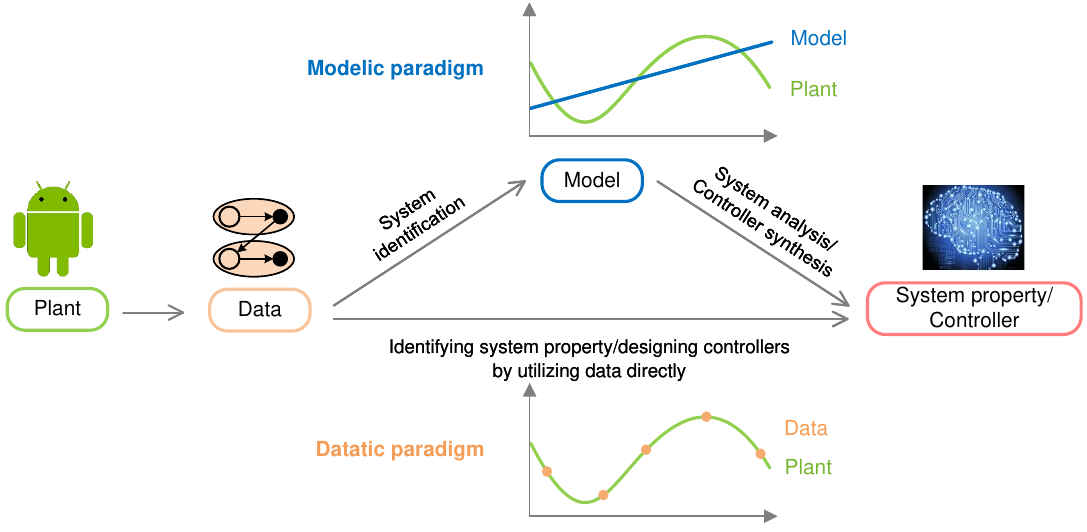}\\
  \caption{Comparison of control paradigms. In the modelic control paradigm, the first step is to establish a dynamic model through system identification. This model offers a continuous but inaccurate description of state transition information. In the datatic control paradigm, data is used directly for system analysis and controller synthesis, providing a discrete yet precise description of state transition.}
  \label{fig: paradigm}
  \end{center}
\end{figure*}

The concept of controllability was first introduced by Kalman and other contemporary researchers in the 1960s, along with a series of criteria for assessing the controllability of linear systems \cite{kalman1960on}.
The controllability test between two states can be reframed as verifying the existence of a solution for all admissible control inputs in a multi-step state transfer equation.
Linear systems are featured with the superposition principle, which means that an arbitrary subsequent state can be expressed as a linear combination of the initial state and control inputs.
Thus, its controllability test can be transformed into the full-rank test of a coefficient matrix that comes from linear state space model \cite{trentelman2012control}.
This coefficient matrix is also known as the controllability matrix.
In addition, some researchers have proposed another coefficient matrix called controllability Gramian, whose positive definiteness is used as an equivalent criterion for full-rank test.
In contrast to these two algebraic criteria, which analyze controllability in an integrated manner, the modal criteria focus on finding controllable subsystems or controllable modes in a regional way.
For example, Kalman decomposition (1963) linearly transformed a state space model into a standard form, which could be decomposed into observable and controllable subsystems~\cite{kalman1963mathematical}.
Popov et al. (1966) introduced the famous Popov-Belevitch-Hautus test, which identified uncontrollable modes of a linear system by finding left eigenvectors of the system matrix that is orthogonal to the input matrix~\cite{popov1966hiperstabilitatea}.
In short, linear controllability test is equivalent to property assessment of certain matrices constructed from state space models.
When it comes to nonlinear systems, controllability test requires analyzing the existence of solutions to nonlinear ordinary differential equations, which depends on specific forms of the equations and is undeterminable in general cases.
Due to this difficulty, only a few sufficient conditions have been proposed for nonlinear controllability test.
Moreover, these sufficient conditions are given on a case-by-case basis due to the complexity and diversity of system dynamic behaviors.
Gershwin et al. (1971) provided a sufficient condition for global controllability of nonlinear systems by utilizing a Lyapunov-like scalar function to indicate whether the system state can be transferred to the target state \cite{gershwin1971controllability}.
Hermann (1977) used the Lie theory to characterize the directions of vector fields in continuous-time nonlinear systems. 
If the vector fields span the entire state space, system controllability is ensured under some special cases, such as symmetric and affine systems~\cite{hermann1977nonlinear}.
Yamamoto (1977) transformed controllability test to solving a fixed-point problem of nonlinear continuous operator in a Banach space.
With this transformation, his study discovered that a sufficient condition for controllability is the existence of a subset in the Banach space that is invariant for the nonlinear operator~\cite{yamamoto1977controllability}.
The above analysis reveals a key requirement shared by both linear and nonlinear systems: an accurate mathematical model of the system is indispensable.
For example, Gershwin's method requires an affine nonlinear model to specify gradient directions for validating Lyapunov conditions.
Hermann's method utilizes an infinitely differentiable model to compute vector fields of closed-loop systems.
Yamamoto’s method relies on a mathematical model consisting of norm-bounded functions to discern the structure of dynamic equation and transform it into a set of linear equations.

Recent years have witnessed a paradigm shift in control field from \textit{modelic} description to \textit{datatic} description.
Here, \textit{modelic} and \textit{datatic} are two newly coined words, where \textit{modelic} means model-driven, model-based or model-related, and \textit{datatic} has a corresponding meaning related to data.
The key distinction between these two paradigms lies in how to describe system dynamics: either a mathematical model or data points \cite{yang2024stability, zhan2024canonical}.
In the modelic control paradigm, an explicit system model is established based on physical laws or system identification, providing a continuous description of system dynamics across the state-action space.
Such models are essential for analyzing system properties and synthesizing controllers, as exemplified by the aforementioned controllability test methods.
However, in many complex systems such as earth atmosphere, financial market, and road transportation, it is often impossible to accurately capture their dynamics solely through basic physical laws or simple hypothetical functions.
As a result, strong simplification of their dynamics must be made at the cost of introducing substantial model errors, which can significantly sacrifice the accuracy of system property analysis.
The datatic control paradigm, fueled by technical advancements of data storage and parallel computation, has shown great potential in solving complex control tasks.
This control paradigm directly leverages data for system property analysis and controller synthesis, eliminating the dependency on pre-built mathematical models.
As direct measurements of system's input and output sequences, data provides accurate state transition information, thus bypassing the issue of model mismatch.
Nevertheless, the accuracy in datatic description comes at the expense of tempo-spatial incompleteness.
Unlike explicit models, system measurements are not continuous in both temporal and spatial domains but only in the form of a limited number of data points.
There is no information in the interval of any two data points, as shown in Fig. \ref{fig: paradigm}.
Therefore, datatic description of a dynamic system must be discrete rather than continuous in the state-action space.
As a consequence, traditional system property analysis tools based on continuous models become inapplicable in datatic control systems.
Since controller design depends on whether states are controllable, datatic controllers become meaningless without reliable controllability test.

Despite its fundamental importance, controllability test is often neglected in systems with datatic description.
While numerous data-based learning methods emerge, e.g., imitation learning~\cite{hussein2017imitation, ho2016generative} and reinforcement learning~\cite{li2023reinforcement, guan2021direct}, most of them share a default but actually unconfirmed assumption that all the system states are controllable.
This neglect has changed in recent years, with some researchers beginning to consider the controllability test of linear datatic systems.
One class of datatic test methods verifies controllability through experiments that strictly limit control inputs.
Wang et al. (2011) proposed to conduct $m$ groups of experiments, where control inputs were set to the same one-hot vector in each test, with $m$ indicating the dimension of control input \cite{zhuowang2011databased}.
The system controllability was assessed by examining the rank of a new controllability matrix generated from these specialized experiments.
Subsequently, Liu et al. (2014) extended this experimental method by considering both measurement noise and process noise and generalized the one-hot control input to a linearly independent vector via invertible linear transformation \cite{liu2014databased}.
Shaker et al. (2017) performed similar specialized experiments to compute a multi-step controllability Gramian for controllability test \cite{shaker2017new}.
The inverse of the smallest eigenvalue of controllability Gramian serves as the upper bound on the minimum control energy.
The core idea of these three methods is to calculate controllability matrix from data collected from a finite number of experiments, in which control inputs are held constant in each experiment group.
Another class of datatic test methods for linear systems does not perform experiments with fixed control inputs but relies on trajectories generated by experiments with arbitrary policies.
Waarde et al. (2020) proposed the concept of data informativity to assess whether the collected data is informative enough for determining controllability within an unknown system set \cite{vanwaarde2020data}. 
They introduced a data-driven Hautus test, which built a trajectory-based matrix and examined whether this matrix was fully ranked or not. 
Mishra et al. (2021) extended this arbitrary-policy-based experiment method to general input-output linear systems that assume no measurement of system state. 
Under the condition of persistently exciting control inputs, they provided a sufficient and necessary condition for system controllability by examining the rank of Hankel matrix constructed from the input-output trajectories \cite{mishra2021datadriven}.
The aforementioned studies reveal that controllability test in linear systems can be simplified to validating properties of certain matrices from experiments, which is feasible to implement in datatic control.

Obviously, existing datatic test methods are limited to linear systems.
This limitation is caused by the fact that traditional controllability is actually a kind of exact controllability (also called point-to-point controllability), which requires continuous state transition information to accurately position initial and target states, as well as their intermediate states.
In a nonlinear system with datatic description, its information is always insufficient no matter how many data samples are collected.
Specifically, to match the needs of traditional controllability, one must find a sequence of data points that are exactly connected one after another from the initial state to the target state.
This is almost impossible because data points are discrete and may not be collected in the manner of complete trajectories.
As a result, exact controllability is not a practical definition in nonlinear datatic systems.
Moreover, strictly controlling a state to a given point is often unnecessary in many real-world tasks.
For example, in anesthesia control, doctors are concerned with whether drug can be injected into a target organ instead of exactly at a specific cell.
In autonomous driving, it is sufficient for the vehicle to stay within a certain lane, not necessarily aligned perfectly with the lane center.

To the best of our knowledge, this paper proposes the first method to test the controllability of nonlinear systems with datatic description, where state transition information is only available at a finite number of data points.
We introduce a new definition called $\epsilon$-controllability, which concerns whether the system state can be transferred into a small neighborhood of the target state rather than exactly at that state.
The Lipschitz continuity assumption is used to restrict the one-step transfer range of an unknown state to the neighborhood of the subsequent state in the dataset.
On its basis, one-step controllability backpropagation theorem is introduced, which propagates the $\epsilon$-controllability from a known state ball to a new one.
Leveraging this theorem, our proposed algorithm can maximumly find $\epsilon$-controllable states of a specified target state. 
Our main contributions are summarized as follows.
\begin{enumerate}
    \item We propose the concept of $\epsilon$-controllability that extends the definition of controllability from point-to-point form to point-to-region form. A state is considered to be $\epsilon$-controllable with respect to a given target state if it can be steered to a closed state ball of this target through finite-step state transition. On its basis, we introduce the notion of $\epsilon$-controllable set, in which all the states are $\epsilon$-controllable with respect to the same target state. For a nonlinear system with datatic description, its controllability test is equivalent to finding those states that belong to the $\epsilon$-controllable set of a particular state as many as possible.
    \item We propose a tree search algorithm called maximum expansion of controllable subset (MECS) to effectively identify potential $\epsilon$-controllable states in the dataset. The Lipschitz continuity is assumed to hold, which confines the one-step transition of all the states in a state ball to a known neighborhood of the subsequent state. This means that if all the states within this neighborhood are $\epsilon$-controllable, it follows that those within the state ball are also $\epsilon$-controllable. Therefore, $\epsilon$-controllability can propagate from one subset to another, which is referred to as one-step controllability backpropagation. Starting from a target $\epsilon$-controllable subset, our MECS algorithm iteratively applies one-step controllability backpropagation to gradually identify all the states in the dataset that are capable of reaching this target subset.
\end{enumerate}

% === II. Definition of $\epsilon$-Controllability ================================
% =================================================================================
\section{Definition of $\epsilon$-Controllability}
The traditional definition of system controllability is the ability to steer the system from a certain initial state to the target state within a finite time step, which is referred to as point-to-point controllability.
This concept, which concerns achieving precise state transition, is also called \textit{exact controllability}.
Existing research on exact controllability tests relies on continuous state transition information across the entire state-action space, which can be provided by a known system model.
However, in a datatic system, data points provide a discrete representation of state transition, leaving the dynamics information within the intervals between any two points unknown.
This information discontinuity makes the verification of exact controllability impractical for systems with datatic description.
Consequently, for their controllability, our interest lies in whether the system state can be transferred to a small region of the target point.

% =======
% FIG. 02
% =======
\begin{figure}[!htbp]
  \begin{center}
  \includegraphics[width=3.0in]{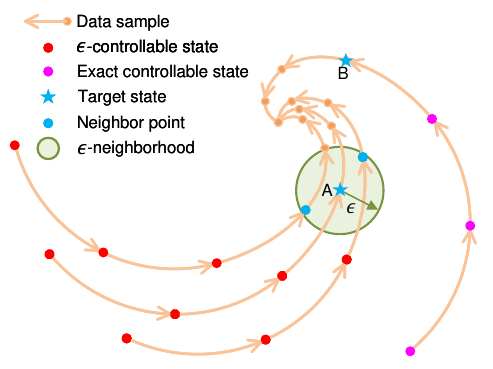}\\
  \caption{The difference between exact controllability and $\epsilon$-controllability: The graph depicts four state trajectories from collected data. Stars denote target states; blue points denote states within the neighborhood of the target $A$ with radius $\epsilon$. Pink points denote exact controllable states w.r.t. target $B$, as they can be precisely steered to the target state. Conversely, red points denote $\epsilon$-controllable states w.r.t. target $A$, which are only capable of being steered into a neighborhood of that target state.}
  \label{fig: data sample}
  \end{center}
\end{figure}

Following the analysis above, we introduce a new definition of controllability, called \textit{$\epsilon$-controllability}, which is actually a kind of point-to-region controllability.
Under this definition, the system controllability concerns whether the state can be transferred to a small neighborhood of the target state rather than exactly at that state point.
The difference between exact controllability and $\epsilon$-controllability is shown in Fig. \ref{fig: data sample}.

Consider a discrete-time nonlinear system:
\begin{equation}
\label{equ: dynamics}
    x^{\prime}=f(x,u),
\end{equation}
where $x\in\mathcal{X}$ is the system state, $x^{\prime}\in\mathcal{X}$ is the next state, $u\in\mathcal{U}$ is the control input, and $\mathcal{X}$ and $\mathcal{U}$ are the state and control input spaces, respectively.
Before formally defining $\epsilon$-controllability, let us define a neighborhood of a state, which is a closed ball in the state space:
\begin{equation}
    \mathcal{B}(y,\delta)=\{z\in\mathcal{X}\mid d(z,y)\leq\delta\},
\end{equation}
where $y\in\mathcal{X}$ is the ball center, $\delta \geq 0$ is the ball radius, and $d(\cdot,\cdot)$ is a metric on the state space $\mathcal{X}$, which is chosen as the Euclidean distance in this paper.
We refer to $\mathcal{B}(y,\delta)$ as a state ball of $y$ with radius $\delta$.

\begin{definition}[$\epsilon$-controllability]
A state $x\in\mathcal{X}$ is said to be $\epsilon$-controllable with respect to its target $x_{\text{T}}$ if $x$ can reach the target neighborhood $\mathcal{B}(x_{\text{T}},\epsilon)$ within a finite number of steps, where $\epsilon$ denotes the error radius.
\end{definition}

\begin{definition}[$\epsilon$-controllable set]
The $\epsilon$-controllable set refers to a set of states that are $\epsilon$-controllable with respect to the same target $x_{\text{T}}$.
\end{definition}

To provide a mathematical description of $\epsilon$-controllability, let us introduce a reachability function $g(\cdot,\cdot):\mathcal{X}\times\mathcal{X}\to\{0,1\}$:
\begin{equation}
    g(x,x_{\text{T}})=\max_{0\leq t<\infty} \{\mathbb{I}(x_t=x_{\text{T}})|x_0=x\},
\end{equation}
where $\{x_i\}_{i=0}^{\infty}$ denotes the state trajectory, and $\mathbb{I}(\mathrm{\cdot})$ is an indicator function. The reachability function $g(x,x_{\text{T}})=1$ if there exists a control policy driving $x$ to $x_{\text{T}}$ within a finite number of steps. Conversely, $g(x,x_{\text{T}})=0$ indicates that no policy can drive $x$ to $x_{\text{T}}$. 
In mathematics, $x$ is $\epsilon$-controllable with respect to $x_{\text{T}}$ if and only if 
\begin{equation}
\exists z\in\mathcal{B}(x_{\text{T}},\epsilon), g(x,z)=1.
\end{equation}
The $\epsilon$-controllable set $\mathcal{C}(x_{\text{T}},\epsilon)$ is denoted as
\begin{equation}
    \mathcal{C}(x_{\text{T}},\epsilon)=\{x\in\mathcal{X}|\exists z\in\mathcal{B}(x_{\text{T}},\epsilon), g(x,z)=1\}.
\end{equation}

The $\epsilon$-controllability is an extension of exact controllability from a point-to-point form to a point-to-region form.
When the value of error radius equals zero, $\epsilon$-controllability degenerates to exact controllability.
Obviously, $\epsilon$-controllability proves particularly suitable for datatic systems with discrete and sparse state transition information.
This concept also provides a tolerable error bound for reaching the target state, which is determined by the error radius.
For many practical control tasks, it is adequate to reach a state that is close to the target but with some small error, making $\epsilon$-controllability a reasonable definition.

% === Ⅲ. Controllability Test for Nonlinear Datatic Systems ======================
% =================================================================================
\section{Controllability Test for Nonlinear Datatic Systems}
A datatic system is represented by a dataset $\mathcal{D}$ collected from system dynamics \eqref{equ: dynamics}:
\begin{equation}
\mathcal{D}=\{(x_i,u_i,x_i^{\prime})|x_i^{\prime}=f(x_i,u_i),i=1,2,\cdots, N\},
\end{equation}
where $N$ is the number of data points.
The $\epsilon$-controllability test aims to find all $\epsilon$-controllable states in $\mathcal{D}$
with respect to a specific target state.
Recall that exact controllability test is impractical in a datatic system because an arbitrarily chosen target may not be located in the dataset $\mathcal{D}$.
The $\epsilon$-controllability allows us to relax the requirement of exactly reaching the target state.
Even with this relaxation, $\epsilon$-controllability test is still intractable without further assumption on system dynamics.
This is because those intermediate points in the sequence still have to be exactly connected, which is not always accessible due to the discreteness of datatic description.
What we need is to further relax exact connection to some neighboring connection.
That is to say, any state in the sequence does not have to reach its subsequent state exactly but can just reach a neighborhood of it.
To test $\epsilon$-controllability using such a data sequence, we need to verify whether all states in all neighborhoods will reach their subsequent neighborhoods in one step.
This verification requires extending state transition information from data points to their neighborhoods, which necessitates a continuity assumption of system dynamics.
With a proper continuity assumption, we are able to iteratively backpropagate $\epsilon$-controllable neighborhoods to their one-step predecessors, which builds the basis of identifying all $\epsilon$-controllable states in the dataset.

\subsection{Lipschitz continuity assumption}
As mentioned above, testing $\epsilon$-controllability in general datatic systems requires continuity assumption of system dynamics $f$.
Specifically, we need to ensure that two states in the same neighborhood are still close to each other after a one-step transition, so that both of them can be contained in a subsequent neighborhood.
This means that we need to restrict the rate of change of $f$, which can be achieved by forcing $f$ to be continuously differentiable.
Here, we assume that $f$ is Lipschitz continuous, which is weaker than continuous differentiability but achieves the same goal.
\begin{assumption}[Lipschitz continuity]
\label{asp: Lipschitz}
The system dynamics $f(x,u)$ is Lipschitz continuous with respect to $x$ and $u$, i.e., there exists Lipschitz constants $L_x,L_u\geq0$ such that for all $x_1, x_2 \in \mathcal{X}$ and $u_1, u_2\in\mathcal{U}$, the following inequality holds:
\begin{equation}
d(f(x_1,u_1), f(x_2,u_2))
\leq L_x d(x_1, x_2) +  L_u d(u_1, u_2),
\end{equation}
where $d(\cdot,\cdot)$ is the metric on $\mathcal{X}$ and $\mathcal{U}$. In this paper, it is chosen as the Euclidean distance.
\end{assumption}

Assumption \ref{asp: Lipschitz} enables us to extend state transition information from discrete data points to their continuous neighborhoods.
With this extension, we can perform controllability test based on a sequence of connected neighborhoods instead of just data points, thus overcoming the difficulty of data discreteness.

% method and algorithm
\subsection{One-step controllability backpropagation}
The test of controllability relies on one of its basic properties: transitivity.
This property means that if $x_1$ is controllable with respect to $x_2$, and $x_2$ is controllable with respect to $x_3$, then $x_1$ is also controllable with respect to $x_3$.
When testing controllability, transitivity is actually applied in the opposite direction of a trajectory.
Specifically, the first controllable state we know is the target state.
Then, all states that reach the target state in one step are termed as controllable.
By repeating this process, all controllable states in $\mathcal{D}$ can be found.
This process can be viewed as propagating controllability backward along state trajectories and is therefore called \textit{controllability backpropagation}.
Controllability test in a datatic system needs to extend the exact state transition to a neighborhood transition.
In other words, this test should be performed in a backward manner not only along the data points themselves but also along all states in their neighborhoods.

To understand why such backpropagation among neighborhoods is possible, let us first consider a one-step controllability backpropagation.
Suppose that all states in the neighborhood of a data point $z$ with radius $\sigma$ are $\epsilon$-controllable, i.e., $\mathcal{B}(z,\sigma)\subset\mathcal{C}(x_\text{T},\epsilon)$.
According to the transitivity of controllability, if a state reaches this neighborhood in one step, it is also controllable.
Let us assume that $x_i$ is such a state in the dataset.
To continue backpropagation, we need to further find a neighborhood of $x_i$, denoted as $\mathcal{B}(x_i,r_i)$, in which all states are controllable.
Here, $r_i$ is the radius of new state ball.
This is possible if all states in $\mathcal{B}(x_i,r_i)$ reach $\mathcal{B}(z,\sigma)$ in one step.
Thanks to the Lipschitz continuity assumption, we can always find an $r_i\ge0$ that achieves this goal.
This conclusion is formally stated in the following one-step controllability backpropagation theorem.

\begin{theorem}[one-step controllability backpropagation]
\label{thm: backprop}
Given a neighborhood $\mathcal{B}(z,\sigma)$ in which all states are $\epsilon$-controllable with respect to $x_{\text{T}}$, if there exists a data point $x_i$ such that its subsequent state $x_i^\prime$ lies within $\mathcal{B}(z,\sigma)$, i.e., $x_i^\prime\in\mathcal{B}(z,\sigma)$, then all states in $\mathcal{B}(x_i,r_i)$ are $\epsilon$-controllable with respect to $x_{\text{T}}$, where
$$r_i=(\sigma-d(x_i^\prime,z))/L_x.$$
\end{theorem}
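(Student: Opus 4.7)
The plan is to pick an arbitrary state $x\in\mathcal{B}(x_i,r_i)$ and show directly that it is $\epsilon$-controllable by exhibiting a one-step transition into the already-$\epsilon$-controllable neighborhood $\mathcal{B}(z,\sigma)$. The natural candidate control is $u_i$, the very input recorded in the dataset for the sample $(x_i,u_i,x_i^\prime)$. Reusing this control is crucial because it makes the control-input term in the Lipschitz bound vanish, which is the only reason the backpropagation radius depends on $L_x$ alone.

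Concretely, I would first apply Assumption 1 with the pair $(x,u_i)$ and $(x_i,u_i)$ to obtain $d(f(x,u_i),x_i^\prime)\le L_x\,d(x,x_i)\le L_x r_i$. Substituting the definition $r_i=(\sigma-d(x_i^\prime,z))/L_x$ gives $d(f(x,u_i),x_i^\prime)\le\sigma-d(x_i^\prime,z)$. A triangle inequality on the metric $d$ then yields
\begin{equation*}
d(f(x,u_i),z)\le d(f(x,u_i),x_i^\prime)+d(x_i^\prime,z)\le\sigma,
\end{equation*}
so $f(x,u_i)\in\mathcal{B}(z,\sigma)$. Because every state in $\mathcal{B}(z,\sigma)$ is assumed $\epsilon$-controllable with respect to $x_{\text{T}}$, there exists a finite-step policy from $f(x,u_i)$ into $\mathcal{B}(x_{\text{T}},\epsilon)$; prepending the single control $u_i$ produces a finite-step policy from $x$ into $\mathcal{B}(x_{\text{T}},\epsilon)$, which is exactly the definition of $\epsilon$-controllability for $x$. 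Since $x$ was arbitrary in $\mathcal{B}(x_i,r_i)$, the whole ball lies in $\mathcal{C}(x_{\text{T}},\epsilon)$.

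The only subtle points I anticipate are bookkeeping rather than deep obstacles. First, one should check that $r_i\ge 0$ so that $\mathcal{B}(x_i,r_i)$ is non-empty; this follows from the hypothesis $x_i^\prime\in\mathcal{B}(z,\sigma)$, which gives $d(x_i^\prime,z)\le\sigma$ and hence $\sigma-d(x_i^\prime,z)\ge 0$. Second, the degenerate case $L_x=0$ (where $f$ does not depend on $x$) would have to be handled by interpreting the formula in the limiting sense, but under the standing assumption that the datatic system has a nontrivial state dependence one can take $L_x>0$. Beyond these remarks, the proof is essentially a single application of the Lipschitz bound followed by the triangle inequality, and then invocation of the transitivity of reachability encoded in the $\epsilon$-controllability of $\mathcal{B}(z,\sigma)$.
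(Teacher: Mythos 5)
Your proof is correct and follows essentially the same route as the paper's: apply the Lipschitz bound with the recorded control $u_i$ to get $d(f(x,u_i),x_i^\prime)\le\sigma-d(x_i^\prime,z)$, then use the triangle inequality to place $f(x,u_i)$ in $\mathcal{B}(z,\sigma)$ and invoke transitivity. Your added remarks on $r_i\ge 0$ and the degenerate case $L_x=0$ are sensible bookkeeping that the paper omits.
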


\begin{proof}
Considering that $x_i^\prime\in\mathcal{B}(z,\sigma)$, we have
\begin{equation}
d(x^\prime_i, z)\leq\sigma.
\end{equation}
For any state $x\in\mathcal{B}(x_i,r_i)$, we have
\begin{equation}
    d(x,x_i)\leq(\sigma-d(x_i^\prime,z))/L_x.
\end{equation}
Given the Lipschitz continuity of $f$ with respect to $x$, we have the following inequality:
\begin{equation}
\begin{aligned}
    d(f(x, u_i), x_i^\prime)
    &=d(f(x, u_i), f(x_i, u_i))\\
    &\leq L_x d(x, x_i)\\
    &\leq \sigma-d(x_i^\prime, z).
\end{aligned}
\end{equation}
Then using the triangle inequality, we obtain that
\begin{equation}
    d(f(x,u_i), z)\leq d(f(x, u_i), x_i^\prime) + d(x_i^\prime, z)\leq\sigma.
\end{equation}
Therefore, for any state $x\in\mathcal{B}(x_i,r_i)$, we have $f(x, u_i)\in\mathcal{B}(z,\sigma)$. This implies that $x$ reaches an $\epsilon$-controllable subset with respect to $x_{\text{T}}$ in one step, confirming that $x$ is also $\epsilon$-controllable.
\end{proof}

\begin{figure}[!htbp]
  \begin{center}
  \includegraphics[width=80mm]{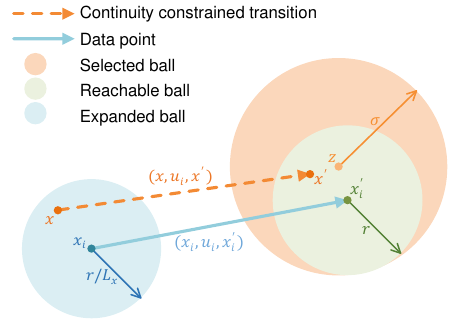}\\
  \caption{One-step controllability backpropagation. Lipschitz continuity restricts the subsequent states of the expanded ball to be within the reachable ball. If the reachable ball is contained by an $\epsilon$-controllable subset, i.e., the selected ball, all states in the expanded ball are also $\epsilon$-controllable. Consequently, controllability backpropagates from the selected ball to the expanded ball.}
  \label{fig: controllability one-step backpropagation}
  \end{center}
\end{figure}

Theorem \ref{thm: backprop} is the basis of $\epsilon$-controllability test in general datatic systems.
It extends the transitivity of controllability from exact state transition to neighborhood transition using the Lipschitz continuity of system dynamics.
Fig. \ref{fig: controllability one-step backpropagation} gives an illustration of Theorem \ref{thm: backprop}.
The orange ball represents the known neighborhood of $\epsilon$-controllable states $\mathcal{B}(z,\sigma)$.
It is called \textit{selected ball} because there may be many such neighborhoods and this is the one we select for backpropagation.
The blue point represents the state $x_i$ that reaches $\mathcal{B}(z,\sigma)$ in one step, and its subsequent state $x_i^\prime$ is represented by the green point.
According to Lipschitz continuity, all states in a neighborhood of $x_i$, called \textit{expanded ball}, will reach a neighborhood of $x_i^\prime$ in one step, called \textit{reachable ball}.
The name of expanded ball comes from the fact that it is an expansion of the current $\epsilon$-controllable subset.
The name of reachable ball comes from the fact that it is where the states in the expanded ball will reach in one step.
The radii of the two balls are related by the Lipschitz constant $L_x$.
The maximum radius of the reachable ball is achieved when it is tangent to the selected ball, i.e., $r=\sigma-d(x_i^\prime, z)$.
At this time, the radius of the expanded ball also achieves its maximum $r/L_x$, which exactly equals $r_i$ in Theorem \ref{thm: backprop}.

The one-step controllability backpropagation theorem provides a solid basis for identifying new $\epsilon$-controllable subsets.
By iteratively applying this theorem, we can find all $\epsilon$-controllable subsets in the dataset, which is basically what our following $\epsilon$-controllability test algorithm does.

\subsection{Controllability test algorithm for datatic system}
This section introduces an $\epsilon$-controllability test algorithm for general datatic systems, called maximum expansion of controllable subset (MECS).
This algorithm finds all $\epsilon$-controllable states in the dataset by searching a tree with $\epsilon$-controllable balls as its nodes, which is called $\epsilon$-controllable tree.
The root node of the tree is the target ball, which is $\epsilon$-controllable by definition.
Every other node of the tree is a ball centered at a data point.
The child nodes are determined by the state transition relationship: all states in a child node reach its parent node in one step.
According to Theorem \ref{thm: backprop}, such child nodes are also $\epsilon$-controllable.
For each data point, it is $\epsilon$-controllable if it is contained in at least one node of the tree.
By traversing the $\epsilon$-controllable tree, we can obtain all $\epsilon$-controllable states in the dataset.
In the following description, we use the terms ``node" and ``ball" interchangeably, which actually refer to the same thing.

\begin{figure*}[!htbp]
  \begin{center}
  \includegraphics[width=0.9\linewidth]{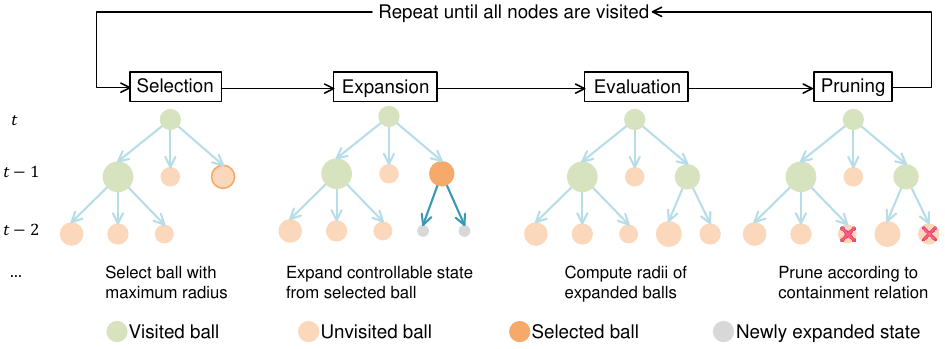}\\
  \caption{Four key steps of MECS algorithm. \textbf{Selection}: Choose the leaf node with the maximum radius. \textbf{Expansion}: Find data points with subsequent states within the selected ball. \textbf{Evaluation}: Compute the radii of the expanded balls using the one-step controllability backpropagation theorem.  \textbf{Pruning}: Remove the leaf nodes that are contained by others.}
  \label{fig: expansion tree}
  \end{center}
\end{figure*}

The MECS algorithm searches the $\epsilon$-controllable tree by iteratively performing four steps: a) Selection, b) Expansion, c) Evaluation, and d) Pruning, as shown in Fig. \ref{fig: expansion tree}.
In the selection step, we choose one node from all current leaf nodes for expansion.
The selection rule is simple: we always choose the leaf node with the maximum radius.
Although any tree search algorithm can be used for selection, e.g., depth-first search and breadth-first search, we find that selecting the node with the maximum radius significantly improves algorithm efficiency.
This heuristic is based on the intuition that a larger state ball is likely to contain more data points, resulting in the acquisition of more expanded balls.
This is also where the name of our MECS algorithm comes from, i.e., it selects the $\epsilon$-controllable subset with the maximum radius for expansion.
In the expansion step, we traverse the dataset to find all data points whose subsequent states lie within the selected ball.
According to the transitivity of controllability, these data points are $\epsilon$-controllable.
Furthermore, according to Theorem \ref{thm: backprop}, neighborhoods of these data points are also $\epsilon$-controllable.
The radii of their neighborhoods are computed in the evaluation step.
With radii computed, the newly expanded balls become leaf nodes of the $\epsilon$-controllable tree and are ready for next expansion.
Before starting the next iteration, we perform a pruning step to eliminate overlap among leaf nodes because it will lead to redundant expansions.
Specifically, if a leaf node is contained in another one, then all states expanded from the smaller node can also be expanded from the larger one.
Therefore, the smaller node can be removed to improve computational efficiency.
Since other leaf nodes have been pruned in previous iterations, we only need to consider the newly expanded nodes in the pruning step: if an expanded node is contained by any other leaf node, it is removed; conversely, if an expanded node contains any other leaf node, the latter is removed.
With the pruning step, the advantage of selecting the leaf node with the maximum radius becomes clearer: it will likely result in larger and more expanded balls, allowing more leaf nodes to be pruned and saving more computing time.
MECS repeats these four steps until all nodes in the $\epsilon$-controllable tree are visited.
At this time, there are no more states to expand, and all $\epsilon$-controllable states are found.

The pseudocode of MECS is detailed in Algorithm \ref{alg: MECS}.
At its beginning, we estimate the local Lipschitz constants within a neighborhood of radius $\delta$ for each state $x_i$ in the dataset $\mathcal{D}$.
The estimation method will be introduced in the following section.
Next, two sets are initialized: $S_{\mathrm{unv}}$ for unvisited $\epsilon$-controllable balls and $S_{\mathrm{vis}}$ for visited ones, which correspond to leaf nodes and internal nodes of the $\epsilon$-controllable tree.
The target ball $\mathcal{B}(x_{\text{T}},\epsilon)$ is treated as an initial $\epsilon$-controllable ball and added to $S_{\mathrm{unv}}$.
Then, the aforementioned four steps are iteratively performed to search the $\epsilon$-controllable tree.
It is worth noting that in the evaluation step, the radius of the expanded ball must be smaller than $\delta$, i.e.,
$$r=\min\{\delta, \sigma-d(x_i^\prime, z)\}.$$
This is because the estimated local Lipschitz constant is valid only within a ball of radius $\delta$.

\begin{algorithm}
\caption{Maximum Expansion of Controllable Subset (MECS)}
\label{alg: MECS}
\SetKwInOut{Input}{Input}\SetKwInOut{Output}{Output}
\SetKw{Break}{break}
\Input{The target state $x_{\text{T}}$, error radius $\epsilon$, dataset $\mathcal{D}$}
\Output{A set of $\epsilon$-controllable balls $S_{\mathrm{vis}}$}
\BlankLine
\emph{// Lipschitz constant estimation}\\
\For{$i=1,2,\dots,N$}{
$(\hat L_{x_i}, \hat L_{u_i}) \gets \text{LipschitzEstimation}(i)$\;
}
Initialize two empty sets $S_{\mathrm{vis}}$ and $S_{\mathrm{unv}}$\;
$S_{\mathrm{unv}}.\mathrm{add}(\mathcal{B}(x_{\text{T}}, \epsilon))$\;
\While{$S_{\mathrm{unv}}$ is not empty}{
    \emph{// Selection}\\
    $\mathcal{B}(z,\sigma)\leftarrow \arg\max_{\mathcal{B}(z,\sigma)\in S_{\mathrm{unv}}}\sigma$\;
    Move $\mathcal{B}(z,\sigma)$ from $S_{\mathrm{unv}}$ to $S_{\mathrm{vis}}$\;
    \For{$i=1,2,\dots,N$}{
        \emph{// Expansion}\\
        \If{$x_i^\prime \in \mathcal{B}(z,\sigma)$}{
            \emph{// Evaluation}\\
            $r_i=\min\{\delta, (\sigma-d(x_i^\prime, z))/\hat{L}_{x_i}\}$\;
            $S_{\mathrm{unv}}.\mathrm{add}(\mathcal{B}(x_i, r_i))$\;
        }
    }
    \emph{// Pruning}\\
    Prune $S_{\mathrm{unv}}$ according to containment relation\;
    \textbf{return} $S_{\mathrm{vis}}$\;
}
\end{algorithm}

\begin{figure}[!htbp]
  \begin{center}
  \includegraphics[width=1\linewidth]{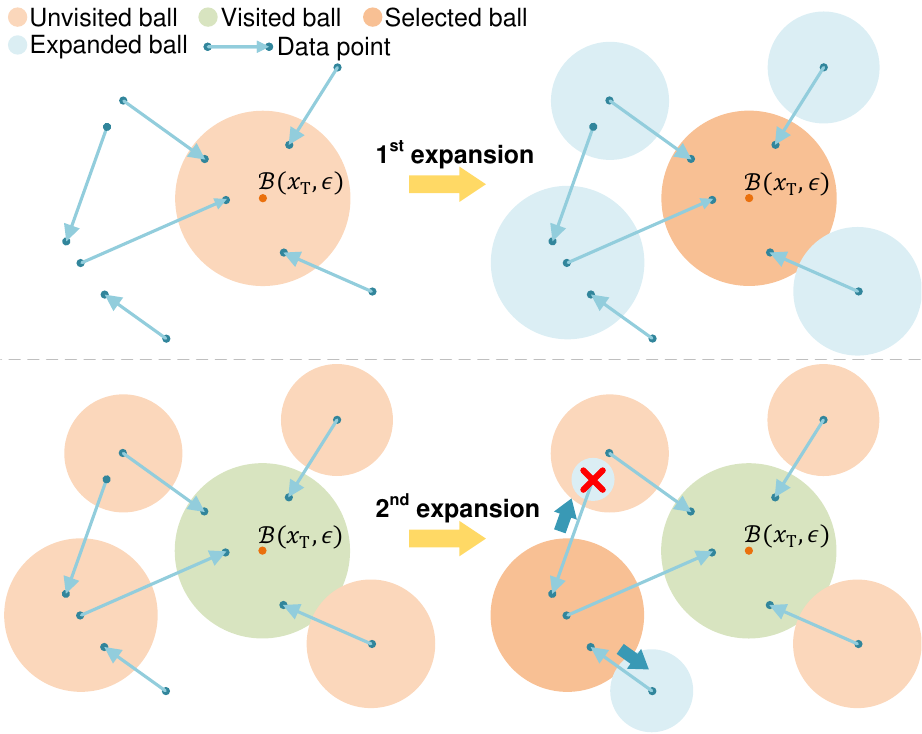}\\
  \caption{A example of MECS. In the first iteration, the target ball is selected, and four $\epsilon$-controllable balls are expanded. In the second iteration, two more $\epsilon$-controllable balls are expanded, and one of them is pruned.}
  \label{fig: controllability expansion algorithm}
  \end{center}
\end{figure}

Fig. \ref{fig: controllability expansion algorithm} demonstrates a simple example of how MECS works.
Initially, the target ball is the only unvisited $\epsilon$-controllable ball and is thus selected.
In the first expansion, four data points are found whose subsequent states lie within the target ball, and the radii of their corresponding $\epsilon$-controllable balls are computed.
Now the four newly expanded balls become unvisited balls.
In the second iteration, we select the one with the maximum radius among them and further obtain two expanded balls.
In the pruning step, one of them is removed because it is contained by an unvisited ball.

\subsection{Estimation of local Lipschitz constants}
In our MECS algorithm, computing the radius of the expanded ball relies on the local Lipschitz constant $L_x$ within a neighborhood of each data point.
For a data point $(x_i,u_i,x_i^\prime)$, its local Lipschitz constant is estimated by considering all data points within a neighborhood of $x_i$ with radius $\delta$, whose indexes are denoted by $$\mathcal{I}_i=\{j|d(x_i,x_j)\leq\delta,1\leq j\leq N\}.$$
The estimation of $L_x$ involves solving a linearly constrained quadratic programming (LCQP) problem:
\begin{equation}
\begin{aligned}
&\textrm{min}_{L_x,L_u}\quad L_x^2 + L_u^2 \\
&\textrm{s.t.}\quad (L_x, L_u)\in \mathcal{C}_\text{feas},
\end{aligned}
\label{equ: opt. for estimating Lips}
\end{equation}
where $\mathcal{C}_\text{feas}$ is the feasible region of LCQP:
\begin{equation}
\begin{split}
\mathcal{C}_\text{feas} = \{(L_x, L_u) \mid 
& d(x_j^\prime, x_k^\prime) \\
&\leq L_x d(x_j, x_k) + L_u d(u_j, u_k), \\
&\forall j,k \in \mathcal{I}_i\}.
\end{split}
\end{equation}
The estimated Lipschitz constants are the minimum values that satisfy the Lipschitz continuity constraints imposed by data points within the neighborhood.

\begin{figure}[!bp]
  \begin{center}
  \includegraphics[width=50mm]{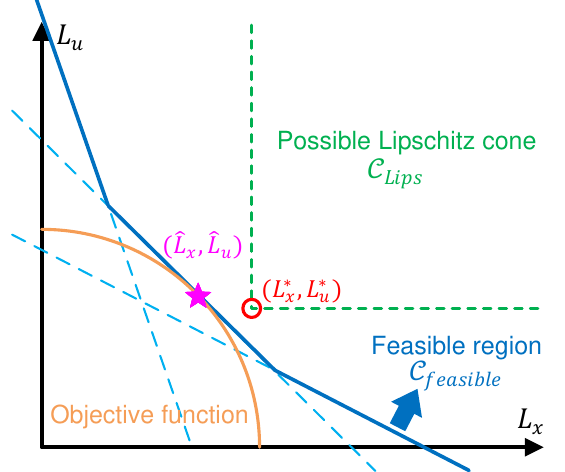}\\
  \caption{A geometric interpretation of the Lipschitz constant estimation problem. In the $L_x-L_u$ coordinate plane, the region enclosed by the green line represents all possible Lipschitz constants, and $L_{x_i}^*$ and $L_{u_i}^*$ denote the smallest possible local Lipschitz constants of $f(x,u)$ with respect to $x$ and $u$, respectively. The light blue line represents the linear constraints of LCQP, while the dark blue line encloses the feasible domain. The orange curve depicts the objective function of LCQP. In this illustration, $(\hat L_{x_i},\hat L_{u_i})$ is the optimal solution to LCQP.}
  \label{fig: opt. for estimating Lips}
  \end{center}
\end{figure}

Fig. \ref{fig: opt. for estimating Lips} shows a geometric interpretation of LCQP \eqref{equ: opt. for estimating Lips}. 
In the $L_x-L_u$ coordinate plane, all possible Lipschitz constants constitute the possible Lipschitz set:
\begin{equation}
\begin{split}
\mathcal{C}_\text{Lips} = \{(L_x, L_u) \mid 
& d(f(x_1,u_1), f(x_2,u_2)) \\
&\leq L_x d(x_1, x_2) + L_u d(u_1, u_2), \\
&\forall x_1, x_2 \in \mathcal{B}(x_i,\delta), u_1, u_2 \in \mathcal{U}\}.
\end{split}
\end{equation}
In fact, the possible Lipschitz set is a cone as follows:
\begin{equation}
\mathcal{C}_\text{Lips} = \{(L_x,L_u)|L_x\geq L_{x_i}^*,L_u\geq L_{u_i}^*\},
\end{equation}
where $L_{x_i}^*$ and $L_{u_i}^*$ are the smallest possible local Lipschitz constants of $f(x,u)$ with respect to $x$ and $u$.
In Fig. \ref{fig: opt. for estimating Lips}, the objective function of LCQP is represented by a circle. 
We denote its solution as $(\hat L_{x_i}, \hat L_{u_i})$, which corresponds to the points where the circle tangentially intersects the feasible region.
The LCQP problem approximates the possible Lipschitz cone $\mathcal{C}_\text{Lips}$ using a limited number of data points. 
Each data point constitutes a linear constraint that corresponds to a half-plane in the $L_x-L_u$ coordinate plane.
The feasible region of LCQP is then formed by the intersection of these half-planes, providing an estimation of Lipschitz constants based on the available data.

The possible Lipschitz cone is necessarily contained within the feasible region, i.e., $\mathcal{C}_\text{Lips} \subset \mathcal{C}_\text{feas}$. 
As more data becomes available in $\mathcal{I}_i$, the feasible region converges to the possible Lipschitz cone, leading to increasingly accurate estimates of the smallest possible local Lipschitz constants.

\subsection{A simplified expanding algorithm with fixed radius}
The state balls found by MECS are rigorously $\epsilon$-controllable according to one-step controllability backpropagation theorem.
However, its evaluation step suffers from high computational complexity because it requires calculating both the radii of reachable balls and local Lipschitz constants to determine the radius of each expanded ball.
To reduce computational burden, we propose a simplified version of MECS based on shortest path search to identify $\epsilon$-controllable balls. This simplified algorithm is named Floyd expansion with radius fixed (FERF), which maintains the radius of all expanded balls as the same value $\epsilon$ throughout the searching process, neither enlarging it nor reducing it.
This simplification is based on the assumption that two states are mutually controllable if their distance is less than the ball radius.
Under this assumption, if the subsequent state $x_i^\prime$ of a data point $x_i$ is $\epsilon$-controllable, then all states in the ball $\mathcal{B}(x_i,\epsilon)$ are also $\epsilon$-controllable.
Applying this idea to the evaluation step in MECS, we can replace the computation of the expanded ball's radius with a direct assignment, i.e., assign $r_i=\epsilon$ in Algorithm \ref{alg: MECS}.

With this replacement, continuing to use the original four-step iteration of MECS is not the best choice.
A better approach is to convert the $\epsilon$-controllability test to a shortest path problem for a directed graph.
To understand this conversion, let us consider a directed graph $G=\{V, E\}$, where the vertex set $V$ contains all states in the dataset and their subsequent states, plus the target state, i.e.,
$$V=\text{set}\left(\{x_i\}_{i=1}^N\cup\{x_i^\prime\}_{i=1}^N\cup\{x_{\text{T}}\}\right),$$
where the operator $\text{set}(\cdot)$ removes all duplicated elements in a sequence.
The edge set $E$ contains all pairs of vertices $(x_i,x_j)$ such that $x_i$ is controllable with respect to $x_j$ in one step.
Such a one-step controllability holds in two cases.
The first is that $x_j$ is the subsequent state of $x_i$ in the dataset, i.e., $x_j=x_i^\prime$.
The second is that the distance between $x_i$ and $x_j$ is less than or equal to $\epsilon$, so they are mutually controllable according to our assumption.
In the second case, $(x_j,x_i)$ is also in the edge set.
An illustration of such a graph is shown in Fig. \ref{fig: controllability simplified expansion}.
In this graph, whether a state is $\epsilon$-controllable with respect to the target state $x_{\text{T}}$ is equivalent to whether there exists a path from the state to $x_{\text{T}}$.
Therefore, the $\epsilon$-controllability test can be converted to a shortest path problem in the directed graph.
Specifically, for each state in the dataset, we compute the length of its shortest path to the target state.
If and only if the length is finite, the state is $\epsilon$-controllable.
To compute the length of a path, we construct a distance matrix:
$$
A_{i,j}=
\begin{cases}
0, & \text{if } i=j \\
1, & \text{if } (x_i,x_j)\in E \\
\infty, & \text{otherwise},
\end{cases}
$$
where $1\le i,j\le|V|$ and $x_i,x_j$ denote the $i$-th and $j$-th vertices in $V$, respectively.
Matrix $A$ essentially defines the distance between two ordered states as the number of steps from the former to the latter and initializes all distances greater than 1 as infinity.
On the basis of this matrix, we can use Floyd's algorithm~\cite{cormen2022introduction} to compute the distances among all state pairs, i.e., shortest paths in a directed graph.
With all distances computed, for an arbitrarily chosen target state, we can directly obtain all $\epsilon$-controllable states by picking out those with finite distances to the target.
Note that this ability is not achievable by the MECS algorithm since it constructs the $\epsilon$-controllable tree with a given target state as the root node and only finds $\epsilon$-controllable states with respect to this target.

\begin{figure}
  \begin{center}
  \includegraphics[trim=3 2 3 2, clip,width=1\linewidth]{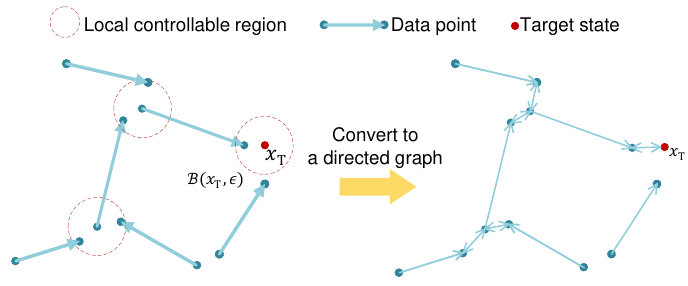}\\
  \caption{A graph search formulation for $\epsilon$-controllability test. If the distance between two states is less than $\epsilon$, they are assumed to be mutually controllable. Under this assumption, the dataset can be converted into a direct graph. Testing $\epsilon$-controllability is equivalent to solving a shortest path problem for this graph.}
  \label{fig: controllability simplified expansion}
  \end{center}
\end{figure}

% === Ⅳ. Complexity Analysis =====================================================
% =================================================================================
\section{Analysis of Algorithm Complexity}
This section analyzes the time complexity of our two algorithms, MECS and FERF, and shows how their computational efficiency can be improved by selecting proper data structures. 
Let us denote the dataset size by $N$, the maximum number of data points in the neighborhood by $n$, and the total number of iterations by $M$.

The time complexity of MECS is analyzed first.
For each sample $(x_i,u_i,x_i^\prime)$, computing its local Lipschitz constant involves two steps: the first is to search for neighboring states of $x_i$ within radius $\delta$; the second is to solve LCQP \eqref{equ: opt. for estimating Lips}.
A naive method for searching neighboring states is to traverse the whole dataset, which has a complexity of $O(N)$.
To alleviate this, we construct a $k$-d tree on the dataset with a complexity of $O(N\log N)$, and query it to obtain neighbors of each state with a complexity of $O(\log N)$ \cite{bentley1975multidimensional}.
The complexity of solving LCQP grows quadratically with respect to its number of inequality constraints \cite{ye1989extension}, which is upper bounded by $n$ in Problem \eqref{equ: opt. for estimating Lips}.
This, solving this LCQP incurs a complexity of $O(n^2)$.
Consequently, the overall complexity of computing Lipschitz constants is $O(N(\log N+n^2))$.

During each iteration, assume that the number of unvisited balls is $S$.
Then, the complexity of finding the one with the maximum radius is $O(S)$, and that of searching for states in the selected ball is $O(\log N)$.
Since the number of states in the selected ball is upper bounded by $n$, the complexity of the pruning step, which traverses both unvisited balls and states in the selected ball, is upper bounded by $O(Sn)$.
Consequently, the complexity of each iteration is $O(\log N+Sn)$.
Since the number of unvisited balls $S$ varies across different iterations, we set a loose upper bound $N$ on it.
Therefore, the complexity of each iteration is upper bounded by $O(Nn)$, and the overall complexity of MECS is upper bounded by $O(MNn)$.
In general cases, the total number of iterations $M$ is difficult to determine.
As a special case, when the Lipschitz constant $L_x$ is greater than $1$, we always have $M \leq N$, and the complexity is upper bounded by $O(N^2n)$.
This is because in this case, the radius of expanded ball gradually decreases, ensuring that the same state will not be expanded twice.
In practice, the $k$-d tree greatly improves the computational efficiency of MECS.
However, in complexity analysis, since the number of unvisited balls $S$ is relaxed to $N$, the acceleration of $k$-d tree is not reflected in the upper bound of overall time complexity.

For FERF, we assume that the size of vertex set is $L$.
Due to the inherent characteristics of the directed graph, we have the inequality $L\leq2N+1$.
The time complexity of filling the vertex set and initializing the distance matrix is $O(N)$, and the complexity of filling the distance matrix using Lipschitz continuity assumption is $O(L^2)$.
Finally, the complexity of solving the shortest path problem using Floyd's algorithm is $O(L^3)$.
Therefore, the overall time complexity of FERF is $O(N^3)$.
This complexity does not include an indeterminable number of iterations $M$ as MECS because Floyd's algorithm is guaranteed to end after a three-level traversal of the vertex set.
Moreover, FERF finds $\epsilon$-controllable states with respect to all possible target states instead of just a given one as MECS.
For the case of a given target state, we can use another shortest path algorithm, e.g., Dijkstra's algorithm, to compute the distances of all states to a specific target.
In this case, the complexity of FERF can be further reduced to $O(N^2)$.

% === Ⅴ. Experiments =============================================================
% =================================================================================
\section{Experimental Validation}
This section evaluates our controllability test method in three datatic control systems, including both linear and nonlinear ones.
To measure how many states can be transferred to a given target, we introduce a new performance index called degree of controllability (DOC), which is defined as the proportion of $\epsilon$-controllable states to all states in the dataset.
For each system, we focus on four aspects: 1) the process of $\epsilon$-controllability expansion and the final $\epsilon$-controllable subsets; 2) the impact of different error radii on DOC for the same target state; 3) the impact of different target states on DOC for the same error radius; and 4) comparison of estimated Lipschitz constant with the true value.

\subsection{Mass-spring system}
We start validating our MECS algorithm in a linear time-invariant system. 
Consider a 2-dimensional mass-spring system:
\begin{equation}
\begin{aligned}
\begin{bmatrix}
    y^{\prime} & \dot{y}^{\prime}
\end{bmatrix}^\top&=
A\begin{bmatrix}
    y & \dot{y}
\end{bmatrix}^\top+Bu, \\
A&=\begin{bmatrix}
    1 & \Delta t\\
    -\frac{k}{m}\Delta t & 1-\frac{\rho}{m}\Delta t
\end{bmatrix}, \\
B&=\begin{bmatrix}
    0 & \frac{1}{m}\Delta t
\end{bmatrix}^\top,
\label{massspring}
\end{aligned}
\end{equation}
where $y$ is the displacement, $m=\SI{0.5}{\kilo\gram}$ is the mass, $k=\SI{1}{\newton\per\metre}$ is the elastic coefficient, $\rho=\SI{1.5}{\newton\per\metre\per\second}$ is the damping coefficient, and $\Delta t=\SI{0.1}{\second}$ is the sampling time.
The system state is $x=[y,\dot{y}]^\top$ and the equilibrium is $x_{\mathrm{equ}}=[0.0,0.0]$. 
We specify a bounded working space $\{\mathcal{X}, \mathcal{U}\}$ for data collection and controllability verification:
\begin{equation}
\begin{split}
&\mathcal{X}=[-1.0,1.0]\times[-1.0,1.0],\\
&\mathcal{U}=[-1.0,1.0].
\label{bound for spring}
\end{split}
\end{equation}

This linear system is controllable because its controllability matrix has full row rank, i.e.
\begin{equation}
    \text{rank}([\begin{matrix}
        A&AB
    \end{matrix}]) = 2.
\end{equation}
It means that for any two states, there exists a series of control inputs that steer one state to the other.

We first collect the dataset consisting of multiple state-action trajectories.
This dataset, shown in Fig. \ref{fig: massspring sampling a}, exhibits dense clustering of states around the equilibrium point.
This clustering is attributed to the system's tendency to transfer toward the equilibrium point due to damping.
The initial state is randomly chosen in the state space, and actions are determined by a random policy.
Once the state-action trajectory reaches a maximum length of $50$, the system state is reset.
The sampling process continues until the dataset contains a total of $N = 5000$ state transition pairs.
Data collection for subsequent experiments followed a similar process.

For a linear system, the Lipschitz constant is the same for all states and can be theoretically computed as
\begin{equation}
L_x^* = \sup_{x_1,x_2\in\mathcal{X}}\frac{\lVert (Ax_1+Bu)-(Ax_2+Bu)\rVert_2}{\lVert x_1 - x_2 \rVert_2}=\lVert A \rVert_2.
\end{equation}
The true Lipschitz constant is the Euclidean norm of system matrix, i.e., $L_x^* = \lVert A \rVert_2 = 1.021$.
By solving LCQP with a confidence radius of $\delta = 0.2$, the estimate of local Lipschitz constant is $\hat L_x=0.98$.
The small error in Lipschitz constant estimation is acceptable because it does not have much effect on the size of $\epsilon$-controllable subsets found by MECS.

After data collection and Lipschitz constant estimation, we start to identify $\epsilon$-controllable states using MECS.
With the error radius $\epsilon=0.05$ and the target state $x_{\text{T}}=x_\text{equ}$, the final $\epsilon$-controllable subsets are visualized in Fig. \ref{fig: massspring sampling}.
The $\epsilon$-controllable subsets identified in different iterations are shown in Fig. \ref{fig: mass_spring_expansion}.
These subsets gradually expand from the neighborhood of equilibrium point until covering almost all states in the dataset.

\begin{figure}[!htbp]
  \begin{center}
  \subfloat[Data points]{
  \includegraphics[trim=0 5 30 30, clip, width=0.48\linewidth]{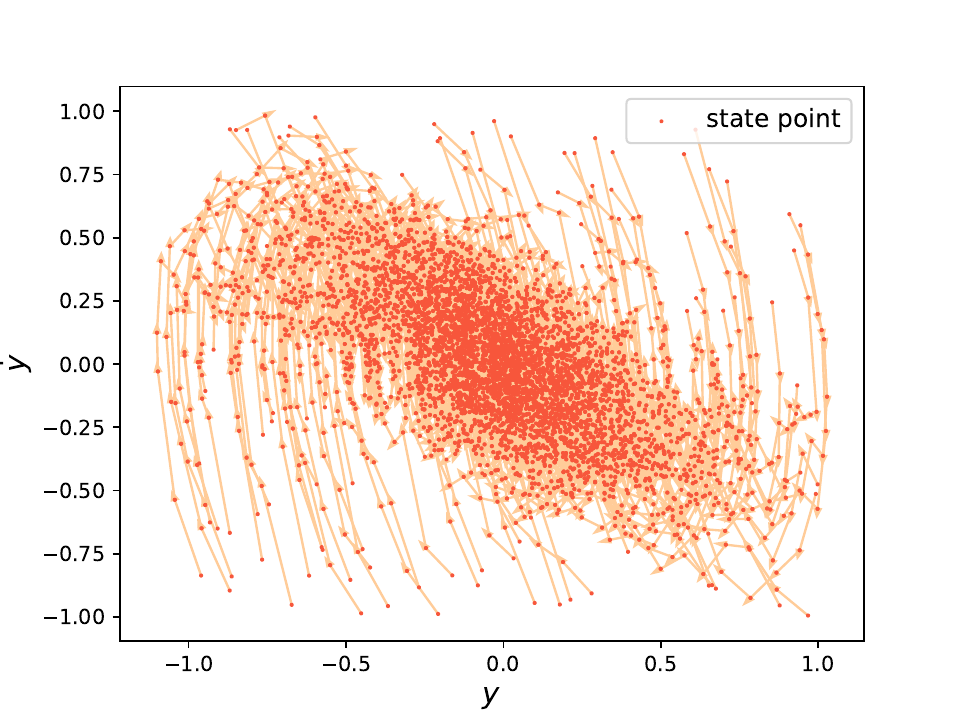}
  \label{fig: massspring sampling a}
  }
  \subfloat[$\epsilon$-controllable subset]{
  \includegraphics[trim=0 5 30 30, clip, width=0.48\linewidth]{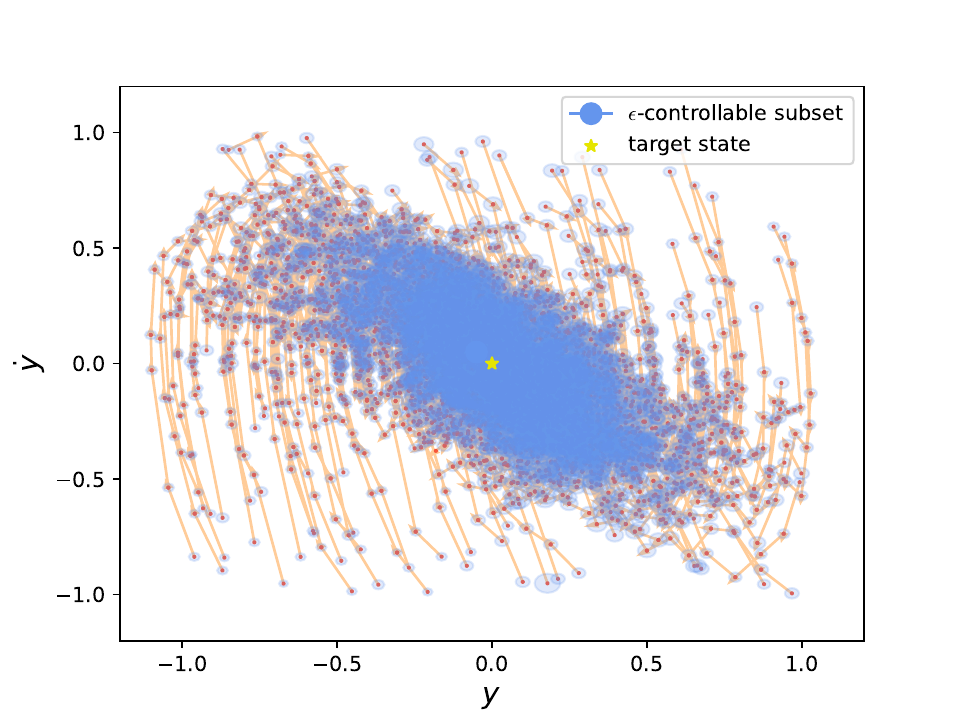}
  \label{fig: massspring sampling b}
  }
  \caption{Sampled data points and identified $\epsilon$-controllable subsets in a mass-spring system when $\epsilon=0.05$. The red points represent states, and the orange arrows represent their corresponding time derivatives. The length of the arrows represents the norm of time derivatives. The blue area represents $\epsilon$-controllable subsets.}
  \label{fig: massspring sampling}
  \end{center}
\end{figure}

\begin{figure}
    \subfloat[Step=$1$]{
        \includegraphics[trim=0 5 30 30, clip, width=0.48\linewidth]{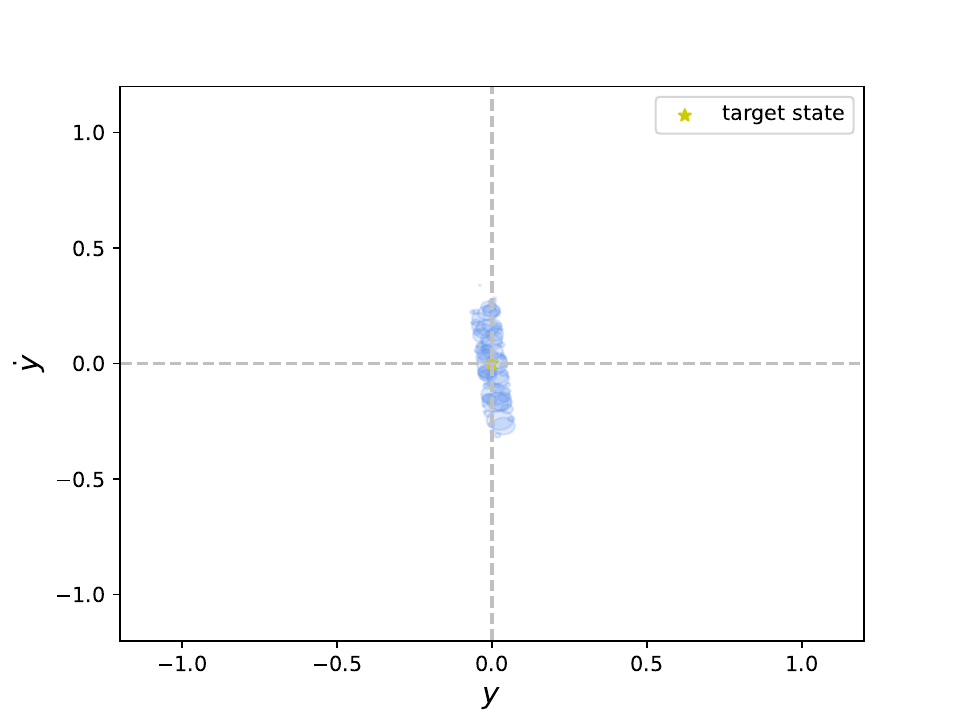}
    }
    \subfloat[Step=$1500$]{
        \includegraphics[trim=0 5 30 30, clip, width=0.48\linewidth]{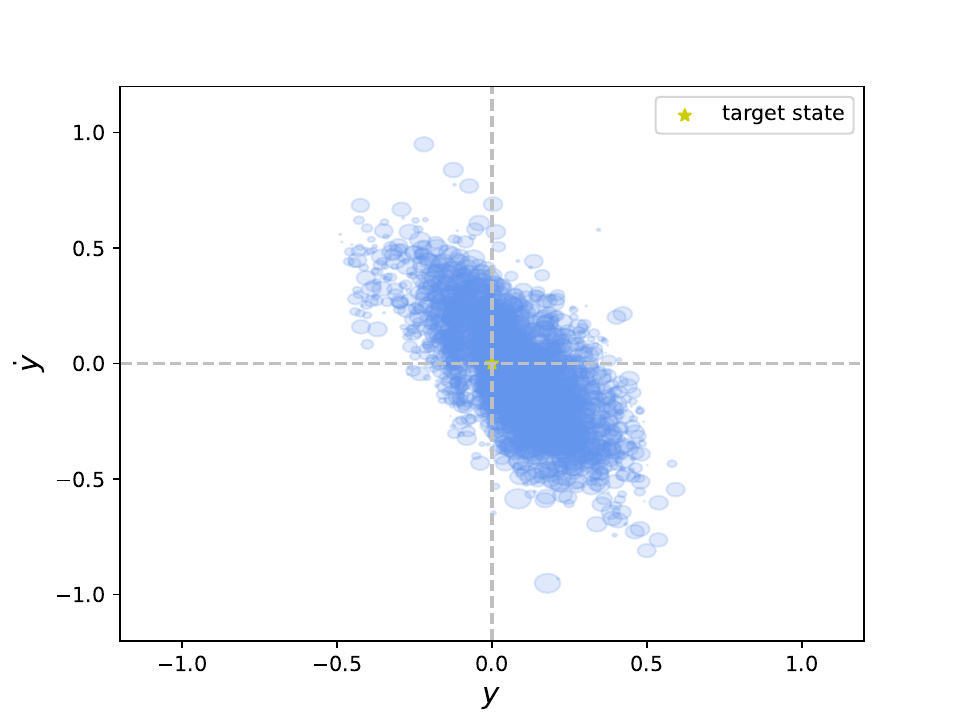}
    }\\
    \subfloat[Step=$3000$]{
        \includegraphics[trim=0 5 30 30, clip, width=0.48\linewidth]{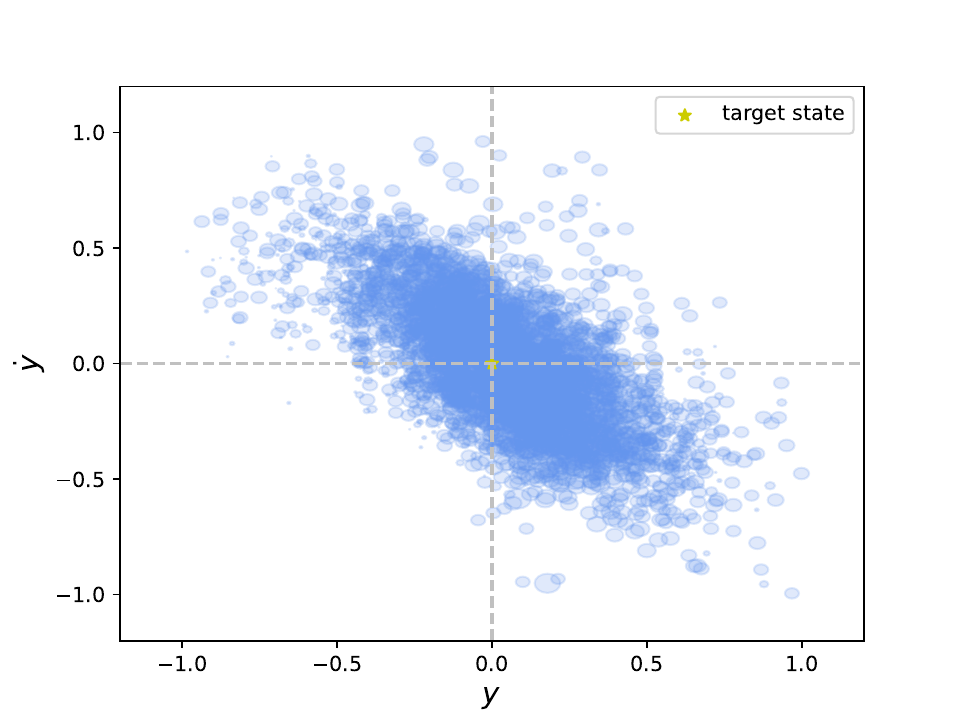}
    }
    \subfloat[Step=$4500$]{
        \includegraphics[trim=0 5 30 30, clip, width=0.48\linewidth]{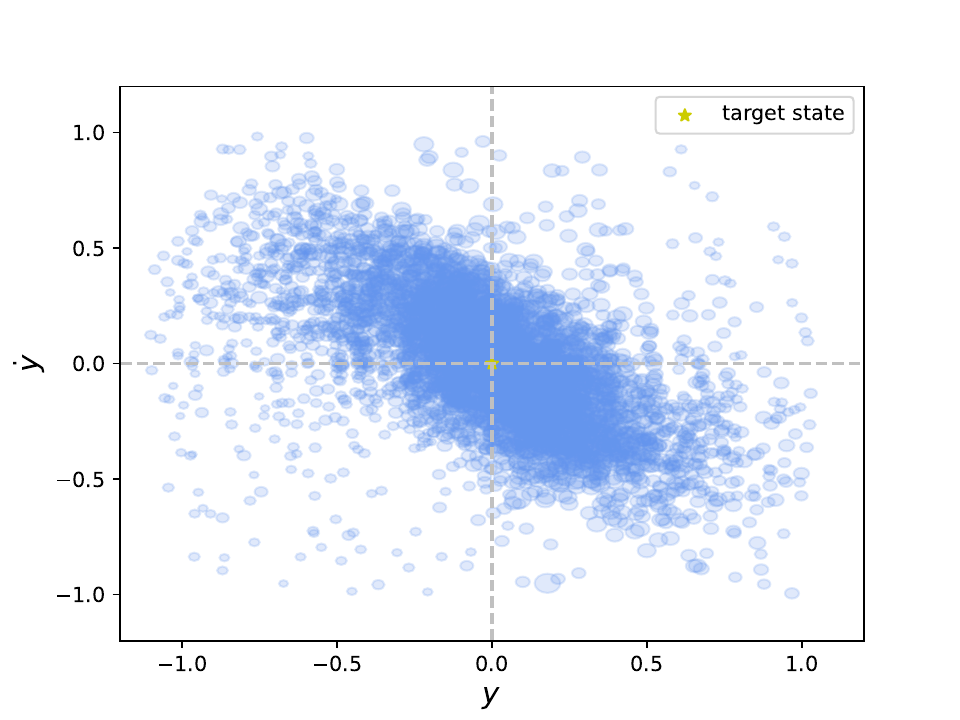}
    }
    \caption{The $\epsilon$-controllable subsets expansion process for a mass-spring system when equilibrium point is targeted $(\epsilon=0.05)$. The blue area represents $\epsilon$-controllable subsets.}
    \label{fig: mass_spring_expansion}
\end{figure}

The relationship between the error radius $\epsilon$ and DOC is visualized in Fig. \ref{fig: massspring epsilon a}. 
As $\epsilon$ increases, the interested target ball is enlarged, leading to an increased proportion of $\epsilon$-controllable states.
When the target state is set to the equilibrium point, all states in the dataset are $\epsilon$-controllable as long as $\epsilon$ exceeds $0.02$.
When the target state deviates from the equilibrium point, a larger $\epsilon$ is required to ensure $\epsilon$-controllability of all states.
When $\epsilon$ becomes sufficiently large, the target ball covers the entire dataset, and all states become $\epsilon$-controllable.

\begin{figure}
  \begin{center}
  \subfloat[DOC changes with $\epsilon$]{\includegraphics[trim=10 5 15 10, clip, width=0.48\linewidth]{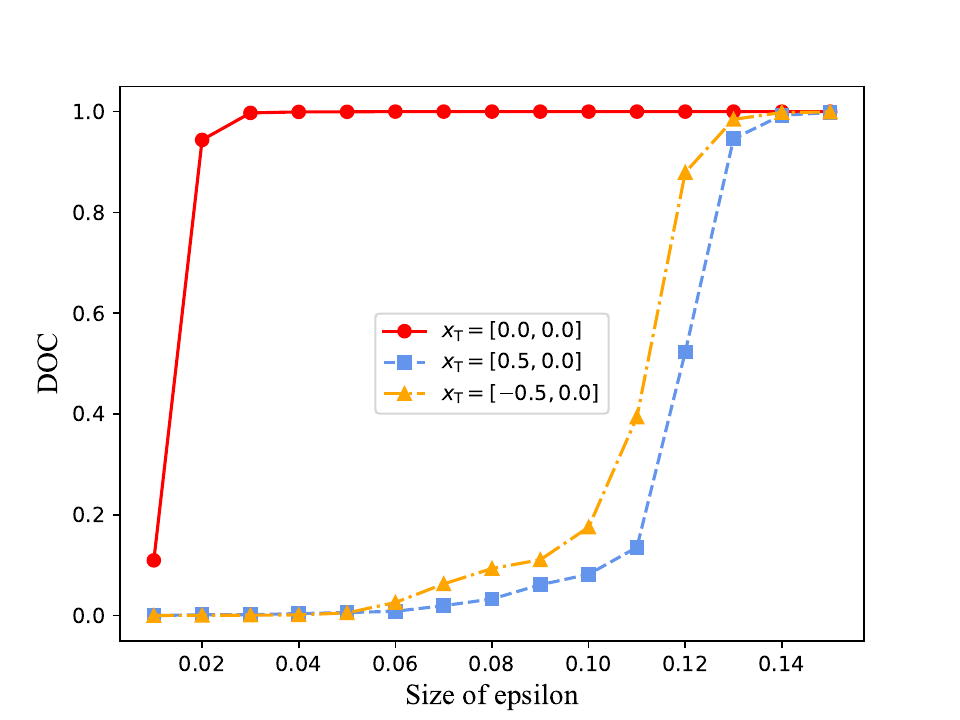}
  \label{fig: massspring epsilon a}
  }
  \subfloat[DOC for different target states]{\includegraphics[trim=20 30 35 30, clip, width=0.48\linewidth]{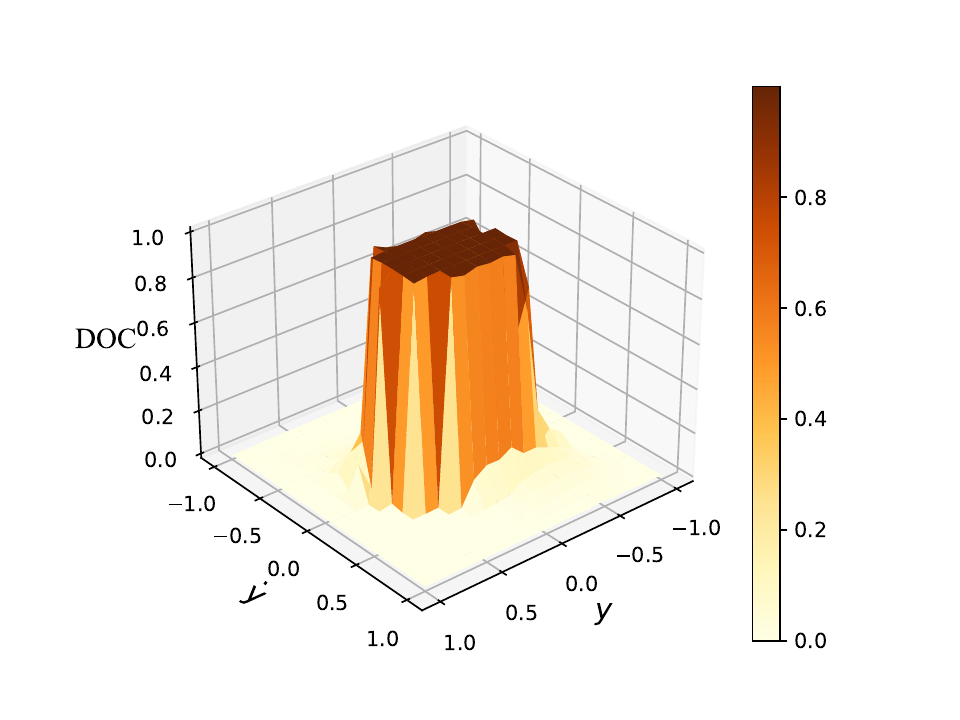}
  \label{fig: massspring epsilon b}
  }
  \caption{The effect of different error radii and target states on DOC in a mass-spring system.}
  \label{fig: massspring epsilon}
  \end{center}
\end{figure}
We calculate DOC for various target states, with a fixed error radius $\epsilon=0.05$, as shown in Fig. \ref{fig: massspring epsilon b}.
It shows that DOC is high in the region near the equilibrium point.
Beyond a certain distance, DOC decreases dramatically as the target state moves away from the equilibrium point.
This is because states near the equilibrium point are more likely to be reached, as system energy tends to dissipate due to friction.

Next, we examine a system that is uncontrollable in the sense of traditional exact controllability.
We discard the control input in the mass-spring system where the system matrix $A$ has two real eigenvalues less than $1$.
The dataset and identified $\epsilon$-controllable subset with respect to the equilibrium point are shown in Fig. \ref{fig: massspring wocontrol sampling}.
As all states follow their specific trajectories towards the equilibrium point, they are all $\epsilon$-controllable when the target state is set to the equilibrium point.

\begin{figure}[!htbp]
  \begin{center}
  \subfloat[Data points]{\includegraphics[trim=0 5 30 30, clip, width=0.48\linewidth]{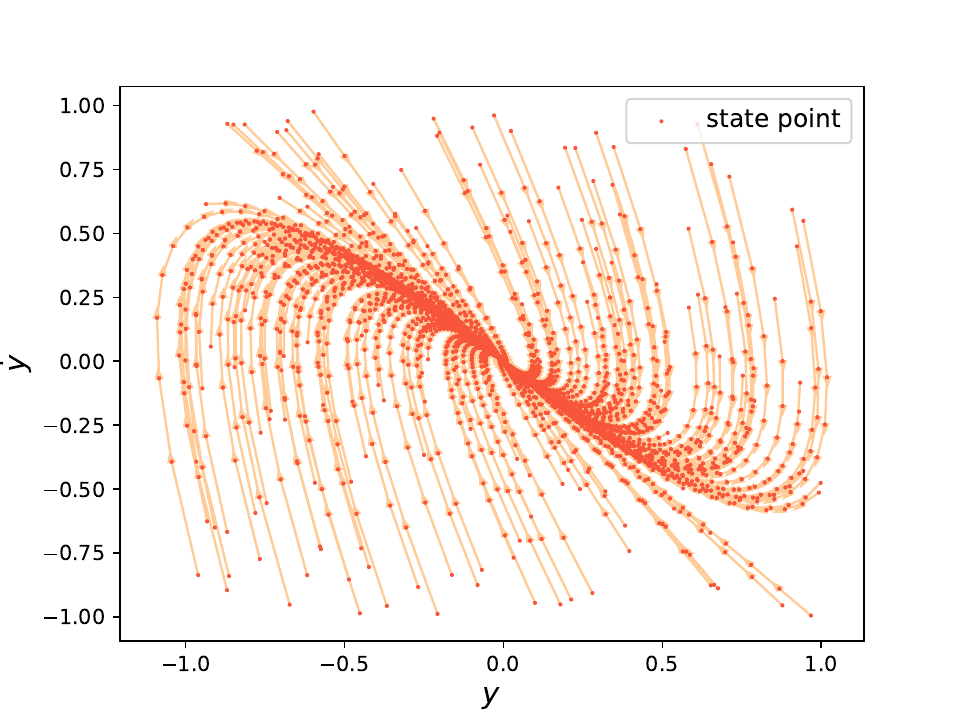}}
  \subfloat[$\epsilon$-controllable subset]{\includegraphics[trim=0 5 30 30, clip, width=0.48\linewidth]{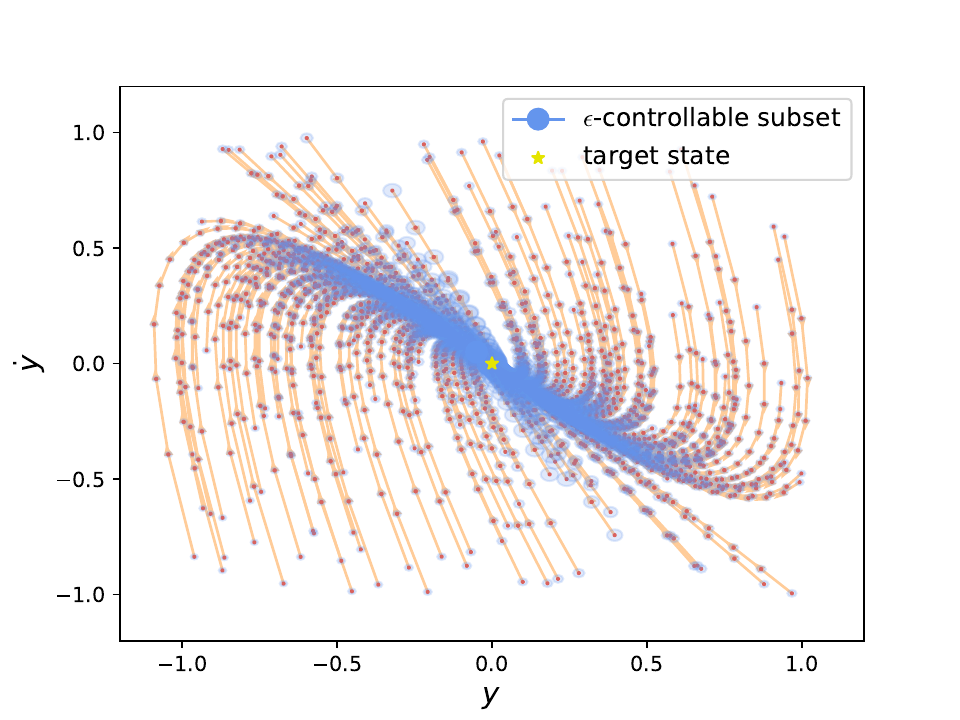}}
  \caption{Sampled data points and identified $\epsilon$-controllable subsets in a mass-spring system without control input when $\epsilon=0.05$.}
  \label{fig: massspring wocontrol sampling}
  \end{center}
\end{figure}

We further analyze the relationship between the error radius $\epsilon$ and DOC under different target states, as shown in Fig. \ref{fig: massspring wocontrol epsilon a}.
When the target state is set to the equilibrium point, all states are controllable even with a small $\epsilon$. 
When the target state deviates from the equilibrium point, DOC remains small until $\epsilon$ exceeds $0.16$.
We also calculate DOC for different targets in the state space with $\epsilon=0.05$, as visualized in Fig. \ref{fig: massspring wocontrol epsilon b}.
It shows that DOC remains low throughout the state space except for a tiny region near the equilibrium point.
This phenomenon arises because, in a system without control input, the states are steered to the equilibrium point along their particular trajectories but lack the ability to transfer between each other.

\begin{figure}
  \begin{center}
  \subfloat[DOC changes with $\epsilon$]{\includegraphics[trim=10 5 15 10, clip, width=0.48\linewidth]{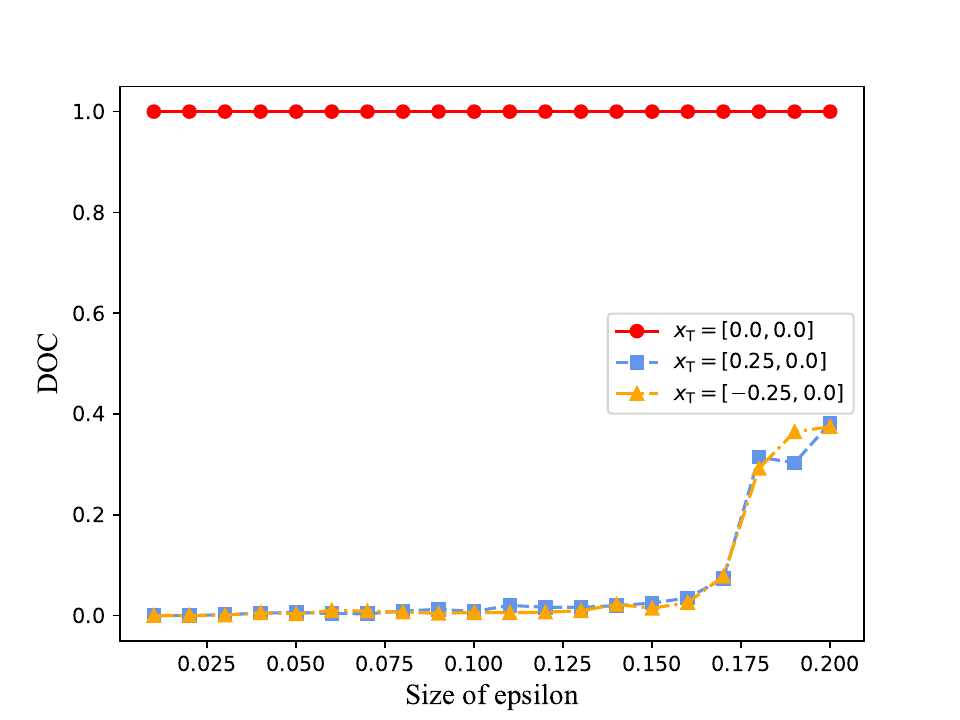}\label{fig: massspring wocontrol epsilon a}}
  \subfloat[DOC for different target states]{\includegraphics[trim=20 30 35 30, clip, width=0.48\linewidth]{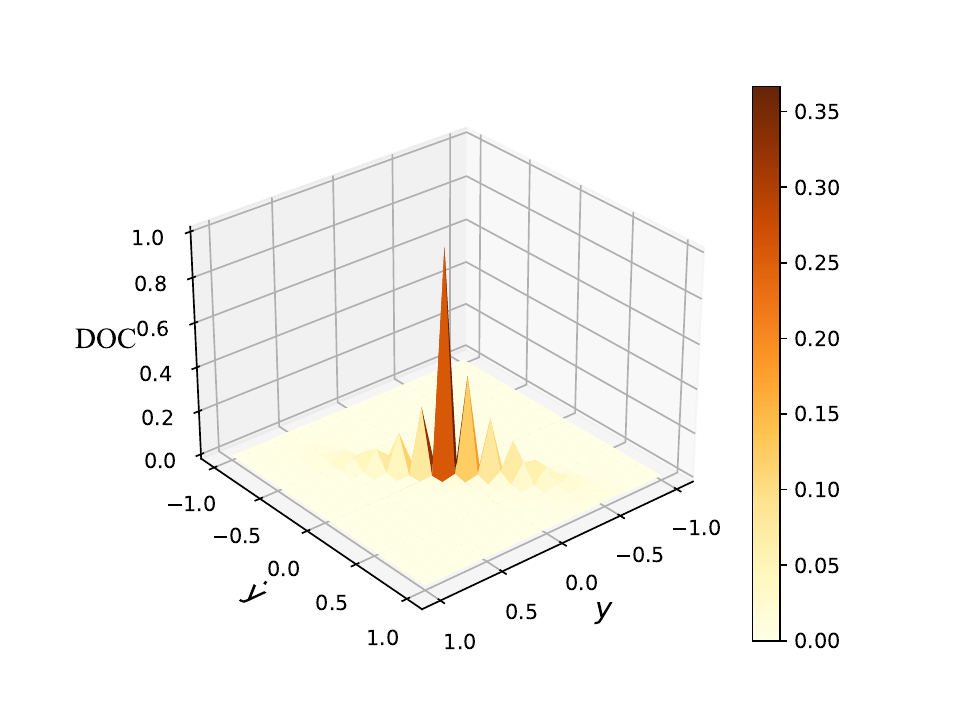}\label{fig: massspring wocontrol epsilon b}}
  \caption{The effect of different error radii and target states on DOC in a mass-spring system without control input.}
  \label{fig: massspring wocontrol epsilon}
  \end{center}
\end{figure}

\subsection{Oscillator}
We consider a Van der Pol oscillator, which is a highly nonlinear system:
\begin{equation}
\begin{split}
    y^{\prime} &= y + \dot{y}\Delta t,\\
    \dot{y}^{\prime} &=-y\Delta t+\dot{y}(1-\frac{1}{2}(1-y^2))\Delta t+u\Delta t,
\end{split}
\label{eq: oscillator dynamics}
\end{equation}
where $y$ is the position coordinate, $x=[y,\dot{y}]^\top$ is the system state, and $\Delta t=\SI{0.1}{\second}$ is the sampling time.
The equilibrium point is $x_{\text{equ}}=[0.0,0.0]$. We specify a bounded state-action space for data collection and controllability test:
\begin{equation}
\begin{split}
&\mathcal{X}=[-1.0,1.0]\times[-1.0,1.0],\\
&\mathcal{U}=[-0.5,0.5].
\label{bound for oscillator}
\end{split}
\end{equation}

Data collection is conducted using the aforementioned sampling method with a maximum trajectory length of $200$.
The sampled data points are shown in Fig. \ref{fig: oscillator sampling a}, in which the number of data points is $N=5000$.
Then, the local Lipschitz constants for each state-action pair are estimated by solving LCQP.
We visualize LCQP at $x_{\text{equ}}$ in Fig. \ref{fig: oc_opt}. 
It shows that our method accurately estimates the local Lipschitz constants when data is sufficient.

\begin{figure}[!htbp]
  \begin{center}
  \includegraphics[trim=10 10 10 10, clip, width=0.6\linewidth]{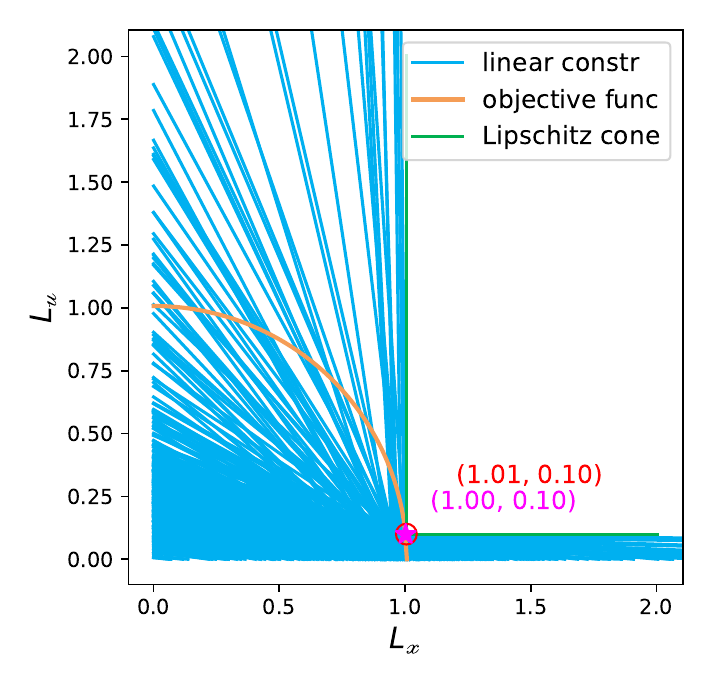}
  \caption{The estimation of local Lipschitz constant in an oscillator system. In the $L_x-L_u$ coordinate plane, the region enclosed by the green line represents all possible Lipschitz constants. $L_{x_i}^*$ and $L_{u_i}^*$ denote the smallest possible local Lipschitz constants of $f(x,u)$ with respect to $x$ and $u$. The blue lines represent the linear constraints of LCQP. The orange curve depicts the objective function of LCQP. In this illustration, $(\hat L_{x_i},\hat L_{u_i})$ is the optimal solution to LCQP.}
  \label{fig: oc_opt}
  \end{center}
\end{figure}

For controllability test, we use MECS to find all $\epsilon$-controllable states. 
To better understand how the target state affects controllability, we set it to a non-equilibrium point $x_{\text{T}}=[0.25,0]$.
The identified $\epsilon$-controllable subsets are visualized in Fig. \ref{fig: oscillator sampling b}. 
The result is consistent with the intuition that nearly all states in the dataset are $\epsilon$-controllable.

\begin{figure}[!htbp]
  \begin{center}
  \subfloat[Data points]{\includegraphics[trim=0 5 30 30, clip, width=0.48\linewidth]{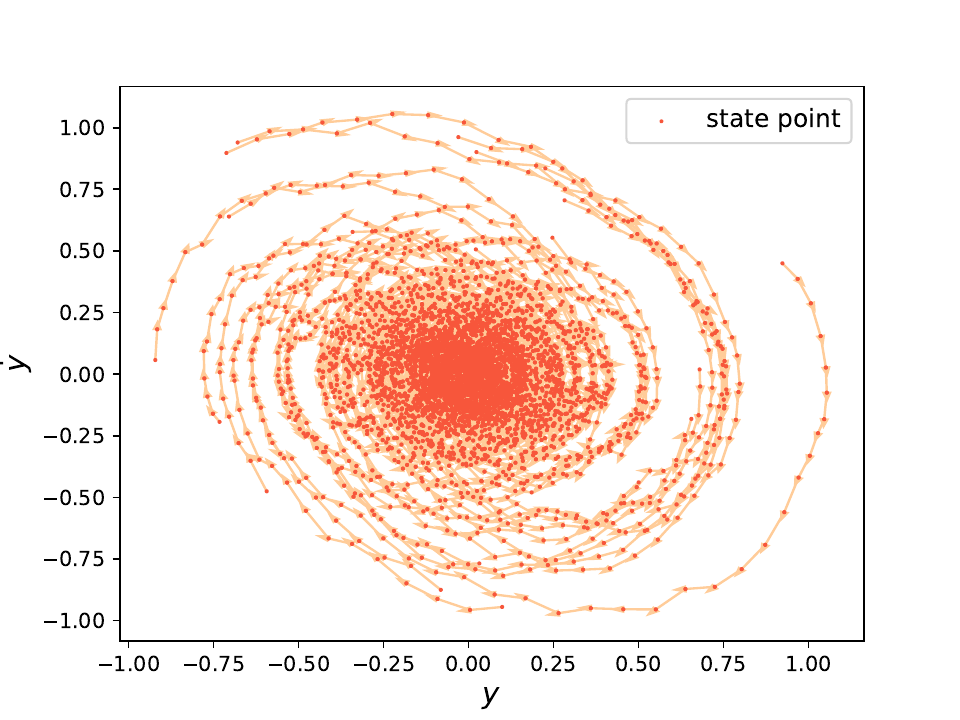}\label{fig: oscillator sampling a}}
  \subfloat[$\epsilon$-controllable subset]{\includegraphics[trim=0 5 30 30, clip, width=0.48\linewidth]{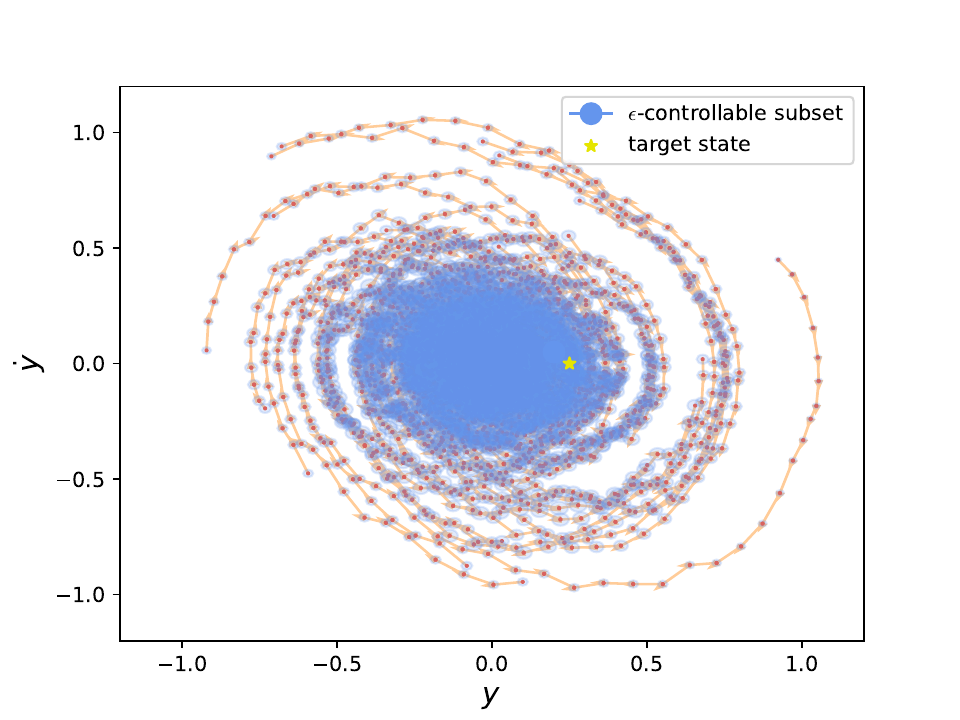}\label{fig: oscillator sampling b}}
  \caption{Sampled data points and identified $\epsilon$-controllable subsets in an oscillator system when $\epsilon=0.05$.}
  \label{fig: oscillator sampling}
  \end{center}
\end{figure}

\begin{figure}
    \subfloat[Step=$1$]{
        \includegraphics[width=0.48\linewidth]{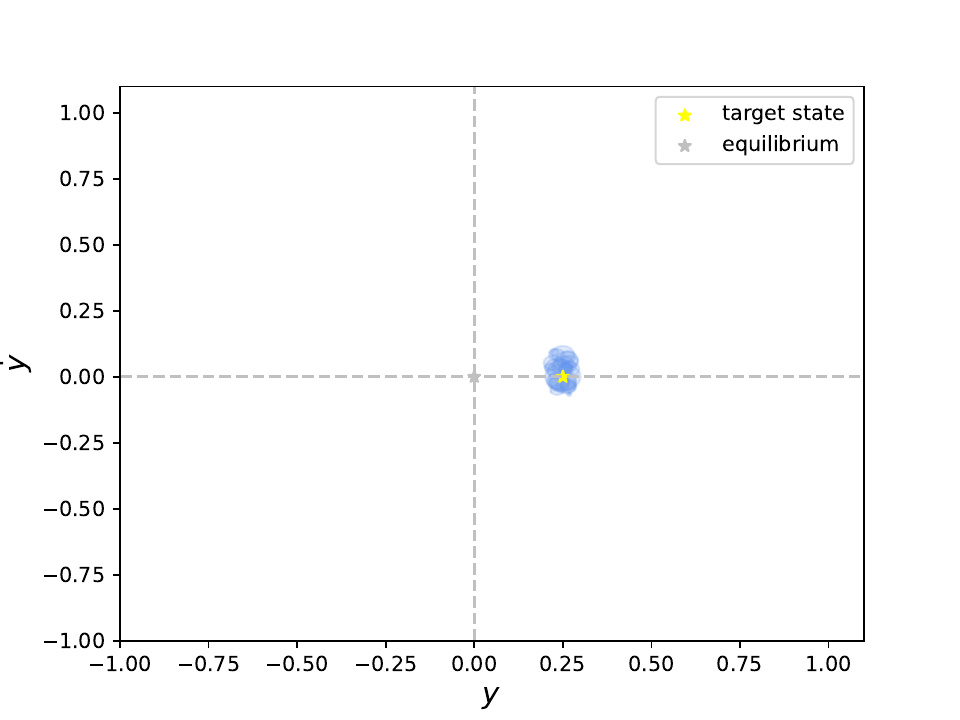}
    }
    \subfloat[Step=$1500$]{
        \includegraphics[width=0.48\linewidth]{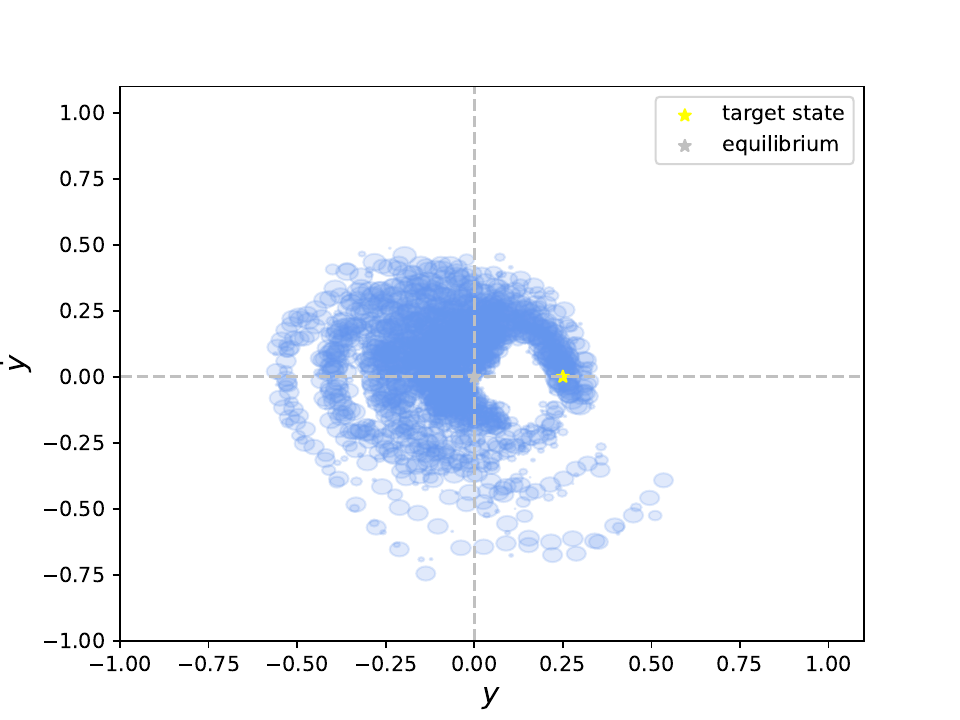}
    }\\
    \subfloat[Step=$3000$]{
        \includegraphics[width=0.48\linewidth]{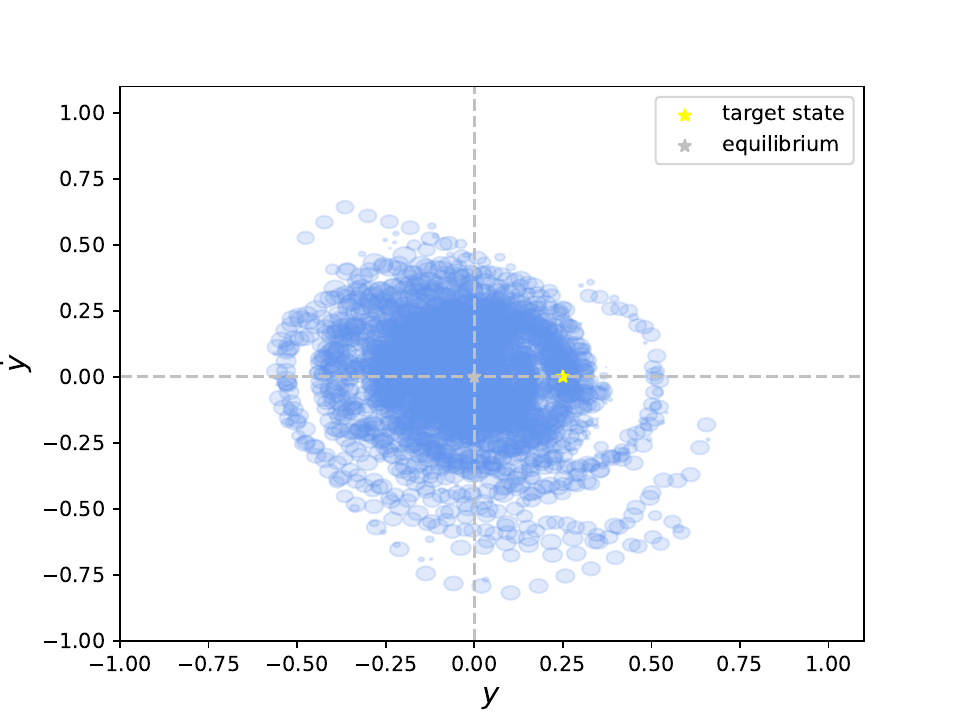}
    }
    \subfloat[Step=$4500$]{
        \includegraphics[width=0.48\linewidth]{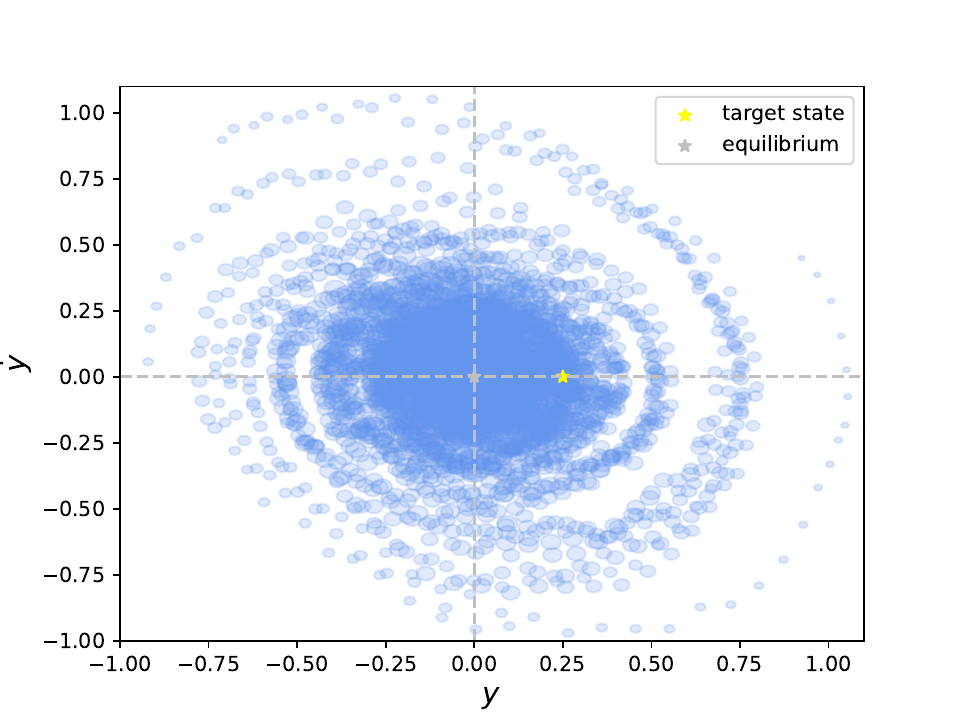}
    }
    \caption{The process of $\epsilon$-controllable set expansion for controllability verification in an oscillator system $(\epsilon=0.05)$, where the blue area represents the $\epsilon$-controllable subset.}
    \label{fig: oc expansion}
\end{figure}

We examine the influence of the error radius $\epsilon$ on controllability, as shown in Fig. \ref{fig: oscillator control epsilon a}.
It indicates that for $\epsilon\geq0.02$, all states in the dataset are $\epsilon$-controllable with respect to the equilibrium point.
When target states deviate from the equilibrium point, DOC surges only when $\epsilon$ is larger than $0.06$. 
This indicates that the equilibrium point exhibits superior controllability compared to its surrounding states in the sense of DOC.
Fig. \ref{fig: oscillator control epsilon b} shows DOC for different target states when $\epsilon=0.05$.
In a region near the equilibrium point, almost all states are $\epsilon$-controllable, while outside this region, DOC quickly drops to zero.

\begin{figure}
  \begin{center}
  \subfloat[DOC changes with $\epsilon$]{\includegraphics[trim=10 5 15 10, clip, width=0.48\linewidth]{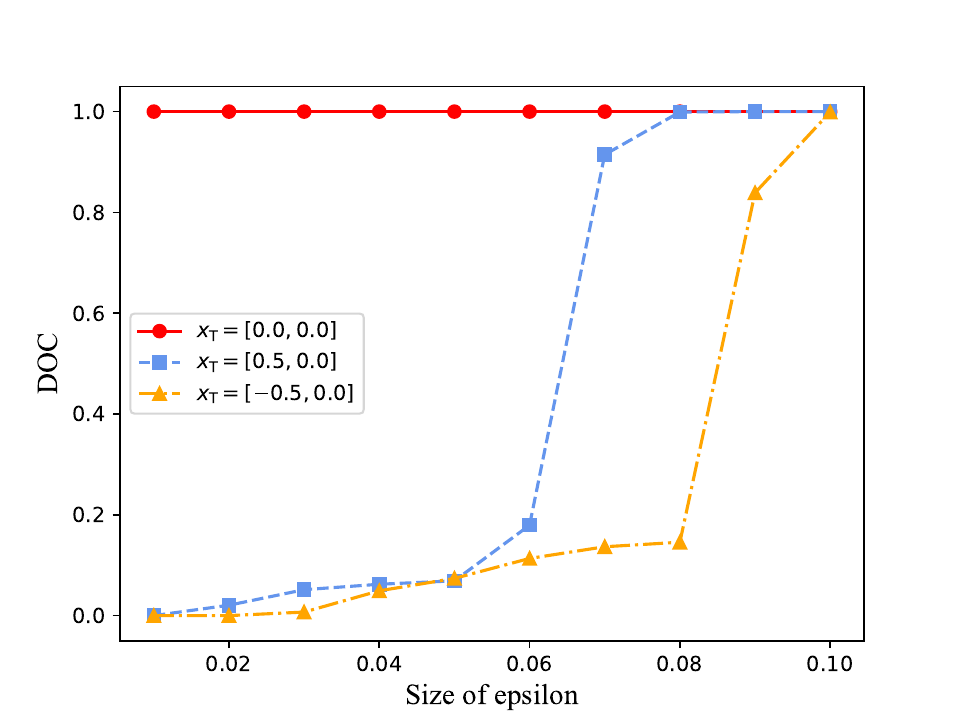}
  \label{fig: oscillator control epsilon a}}
  \subfloat[DOC for different target states]{\includegraphics[trim=20 30 35 30, clip, width=0.48\linewidth]{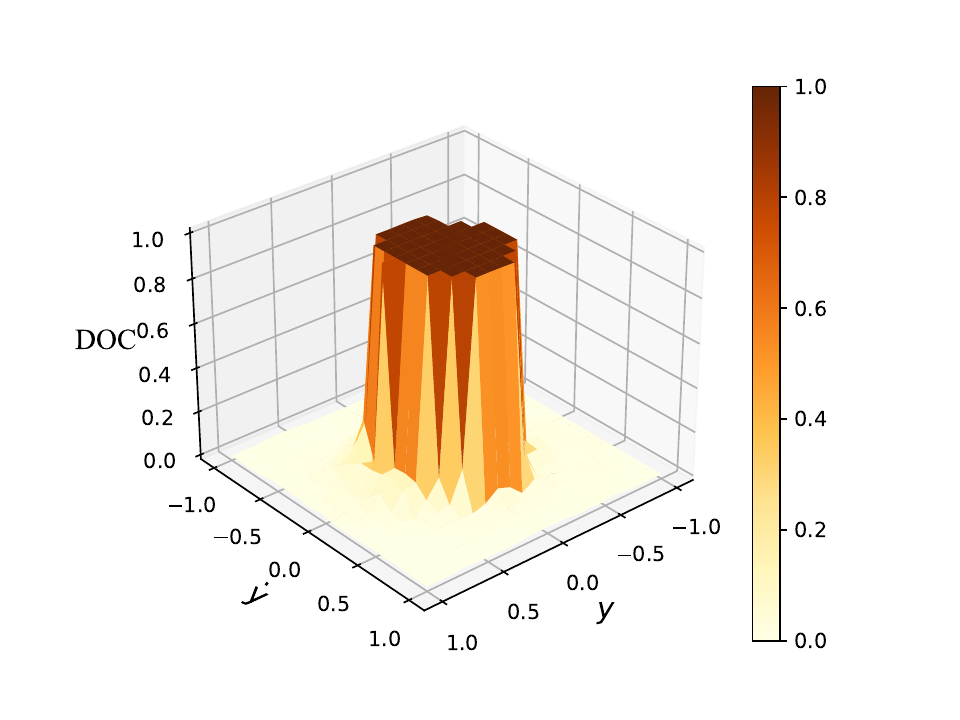}
  \label{fig: oscillator control epsilon b}}
  \caption{Relationship between changes in DOC with error radius $\epsilon$ and target state in an oscillator system.}
  \end{center}
\end{figure}

We consider an uncontrollable version of the oscillator system by removing its control input.
The sampled data points are shown in Fig. \ref{fig: oscillator wo sampling a}.
The states are steered towards the equilibrium point following helical trajectories.
Fig. \ref{fig: oscillator wo sampling b} shows the identified $\epsilon$-controllable subsets when $x_{\text{T}}$ deviates from $x_{\text{equ}}$.
Due to the lack of control input, only states on specific helical curves can be steered to the target state. 

\begin{figure}[!htbp]
  \begin{center}
  \subfloat[Data points]{\includegraphics[trim=0 5 30 30, clip, width=0.48\linewidth]{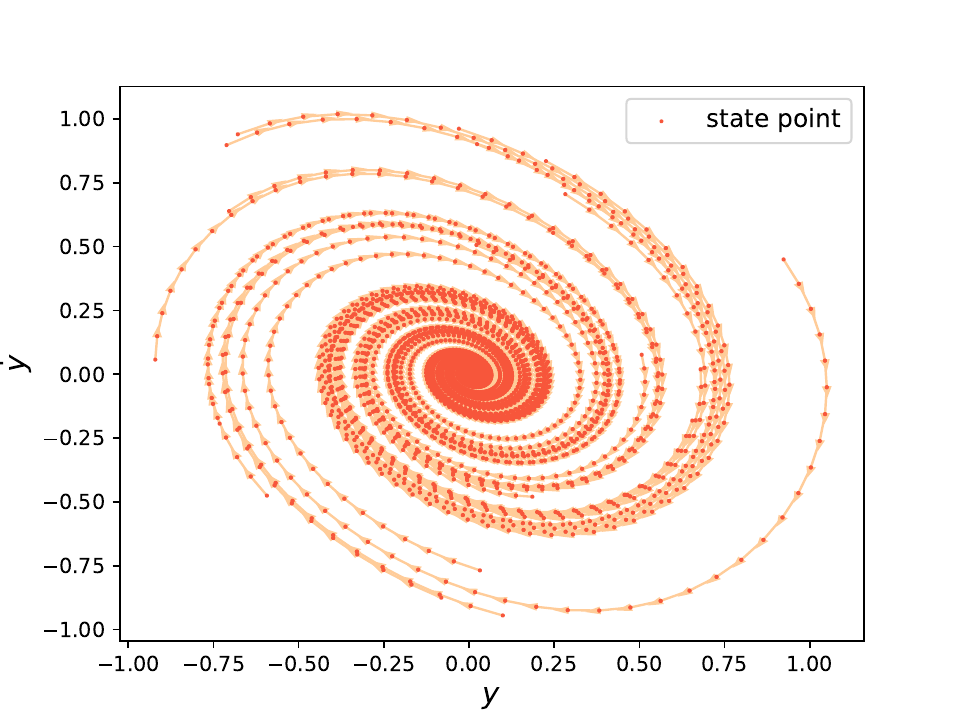}\label{fig: oscillator wo sampling a}}
  \subfloat[$\epsilon$-controllable subset]{\includegraphics[trim=0 5 30 30, clip, width=0.48\linewidth]{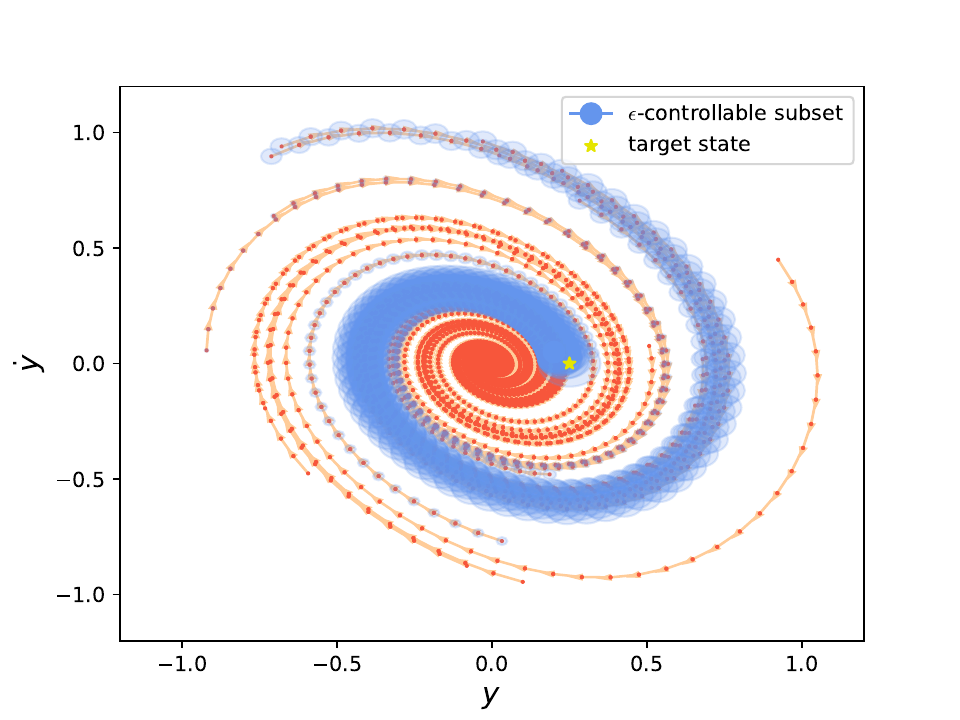}\label{fig: oscillator wo sampling b}}
  \caption{Sampled data points and identified $\epsilon$-controllable subsets in an oscillator system without control input when $\epsilon=0.05$.}
  \label{fig: oscillator wo sampling}
  \end{center}
\end{figure}

Fig. \ref{fig: oscillator wocontrol epsilon a} shows DOC under different error radii.
For the equilibrium point, a small $\epsilon$ is sufficient to make all states $\epsilon$-controllable.
For non-equilibrium points, DOC significantly drops when $\epsilon$ is small and slowly rises as $\epsilon$ increases.
The relationship between DOC and the target state is visualized in Fig. \ref{fig: oscillator wocontrol epsilon b}.
It clearly shows that almost all states are $\epsilon$-controllable when the target state is set to the equilibrium point.
When the target state deviates from the equilibrium point, DOC drops much more quickly to zero than in the case with control input.

\begin{figure}
  \begin{center}
  \subfloat[DOC changes with $\epsilon$]{\includegraphics[trim=10 5 15 10, clip, width=0.48\linewidth]{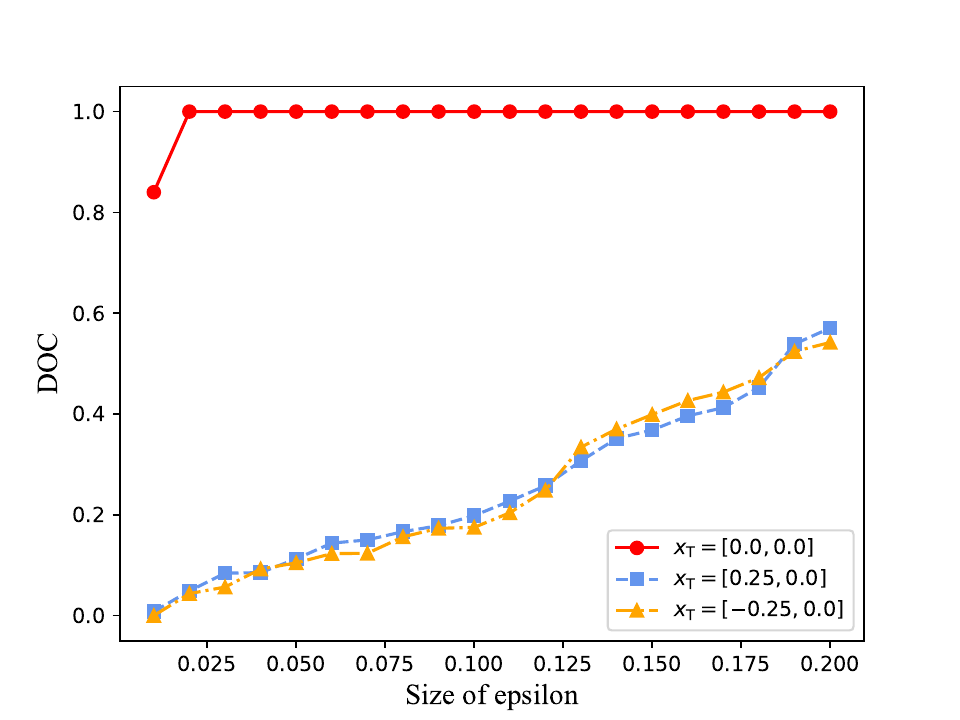}\label{fig: oscillator wocontrol epsilon a}}
  \subfloat[DOC for different target states]{\includegraphics[trim=20 30 35 30, clip, width=0.48\linewidth]{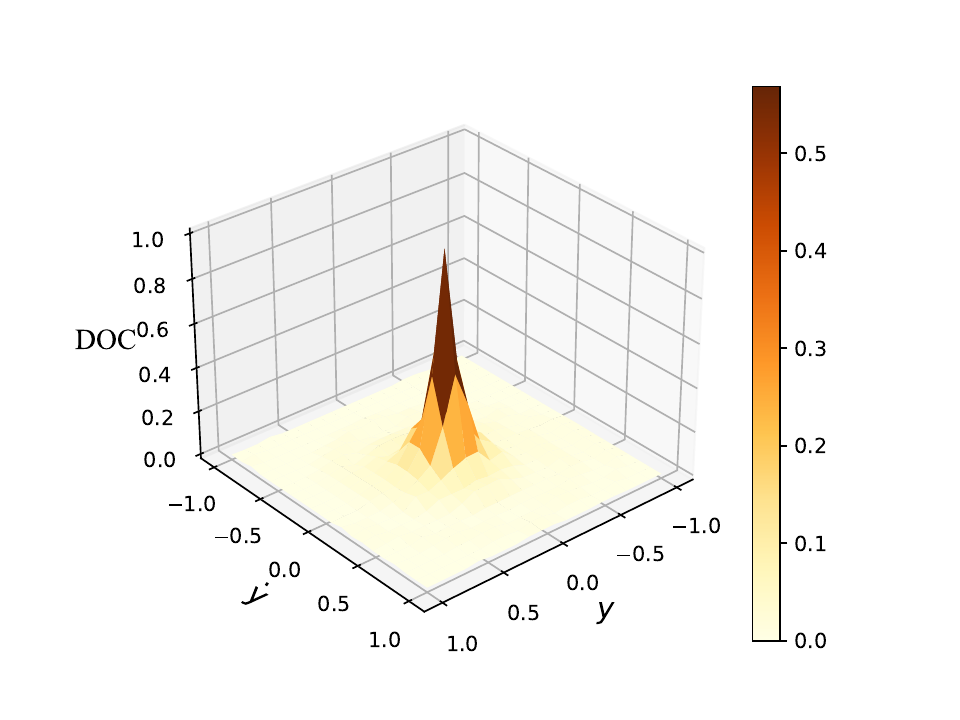}\label{fig: oscillator wocontrol epsilon b}}
  \caption{The effect of different error radii and target states on DOC in an oscillator system without control input.}
  \end{center}
\end{figure}

\subsection{Tunnel-diode circuit}
In the aforementioned systems, there is only one equilibrium point, and all states in the dataset are $\epsilon$-controllable with respect to it.
Next, we explore a more complex system with multiple equilibrium points.
Consider a tunnel-diode circuit:
\begin{equation}
\begin{aligned}
    {{x}_1}^{\prime}&=x_1+\frac1C(-h(x_1)+x_2)\Delta t,\\
    {{x}_2}^{\prime}&=x_2 +\frac1L(-x_1-Rx_2+u)\Delta t,
\end{aligned}
\end{equation}
whose parameters are $R=\SI{1.5e3}{\ohm}, C=\SI{2e-12}{\farad}$, $L=\SI{5e-6}{\henry}$, and $\Delta t=\SI{0.1}{\second}$, and $h(\cdot)$ is given by
\begin{equation}
\begin{split}
    h(x_1)&=17.76x_1-103.79x_1^2+229.62x_1^3\\&
    -226.31x_1^4+83.72x_1^5.
\end{split}
\end{equation}
The control input is constant at $u=\SI{1.2}{\volt}$.
We specify a bounded state space for data collection and controllability verification: 
\begin{equation}
\mathcal{X}=[-0.3,1.4]\times\left[-0.3,1.4\right].
\end{equation}

\begin{figure}[!htbp]
  \begin{center}
  \includegraphics[trim=0 5 30 30, clip, width=0.6\linewidth]{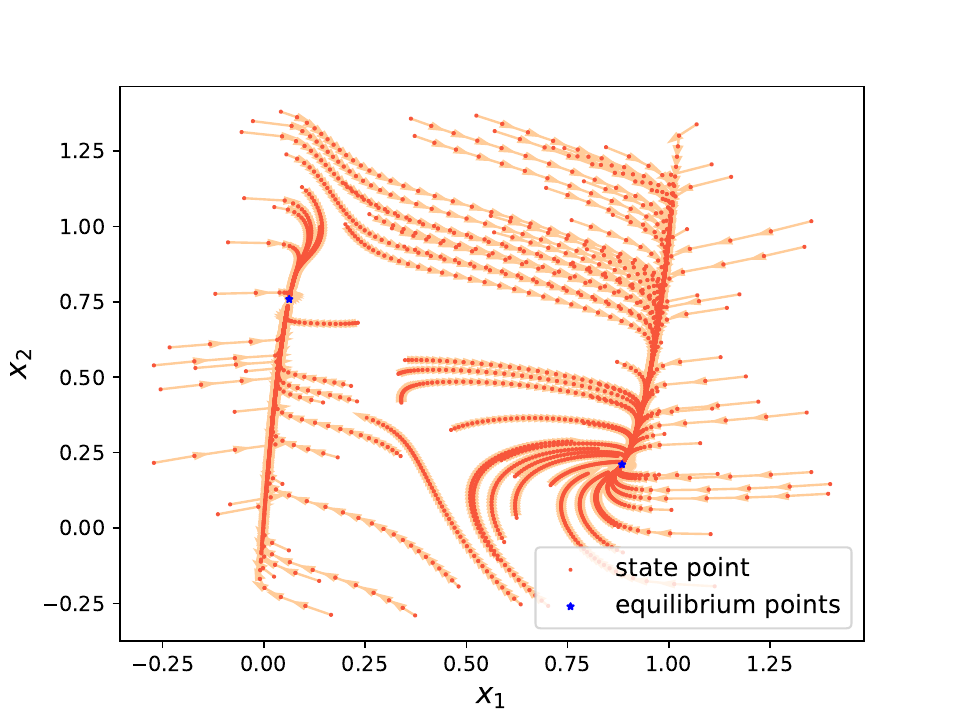}\\
  \caption{Data points for controllability verification in a tunnel-diode circuit.}
  \label{fig: tunnel sampling}
  \end{center}
\end{figure}

The collected data points are shown in Fig. \ref{fig: tunnel sampling}, in which the number of data points is $N=5000$.
We can verify that the system has two equilibrium points.
In fact, by setting $\dot{x_1}=\dot{x_2}=0$, we have three equilibrium points: $x_{\mathrm{equ0}}=[0.285.0.61]$, $x_{\mathrm{equ1}}=[0.063.0.758]$, and $x_{\mathrm{equ2}}=[0.884.0.21]$. 
Among them, $x_{\mathrm{equ1}}$ and $x_{\mathrm{equ2}}$ are stable equilibrium points, while $x_{\mathrm{equ0}}$ is a saddle point.
Therefore, all state trajectories eventually reach either $x_{\mathrm{equ1}}$ or $x_{\mathrm{equ2}}$.

Fig. \ref{fig: tunnel expansion left} and Fig. \ref{fig: tunnel expansion right} demonstrate the expansion of $\epsilon$-controllable subset when target states are set to $x_{\mathrm{equ1}}$ and $x_{\mathrm{equ2}}$, respectively.
When the expansion originates from either of them, the $\epsilon$-controllable subsets do not intersect with those expanded from the other one.
The effect of $\epsilon$ on DOC is further discussed.
As depicted in Fig. \ref{fig: tunnel epsilon a}, DOC is independent of $\epsilon$ at three equilibrium points.
Besides, the sum of DOC at $x_{\mathrm{equ1}}$ and $x_{\mathrm{equ2}}$ approximately equals $1$, indicating that almost no state is $\epsilon$-controllable when the target state is the unstable equilibrium $x_{\mathrm{equ0}}$.
In this system, all states rapidly converge toward their nearest stable equilibrium point.

\begin{figure}
\begin{center}
    \subfloat[Step=$1$]{
        \includegraphics[trim=0 5 30 30, clip, width=0.48\linewidth]{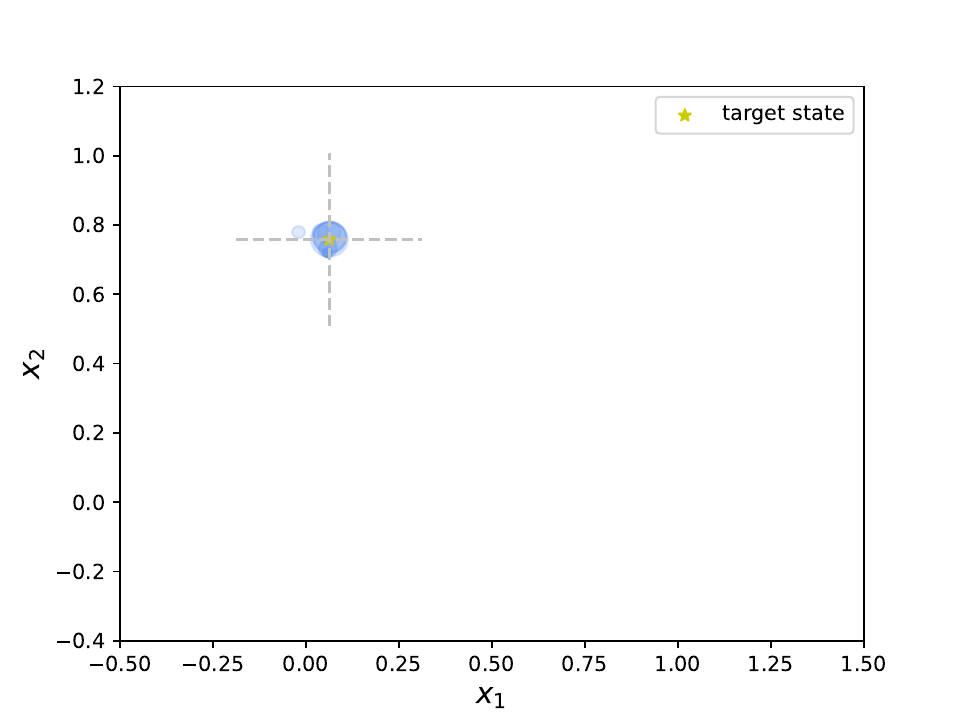}
    }
    \subfloat[Step=$200$]{
        \includegraphics[trim=0 5 30 30, clip, width=0.48\linewidth]{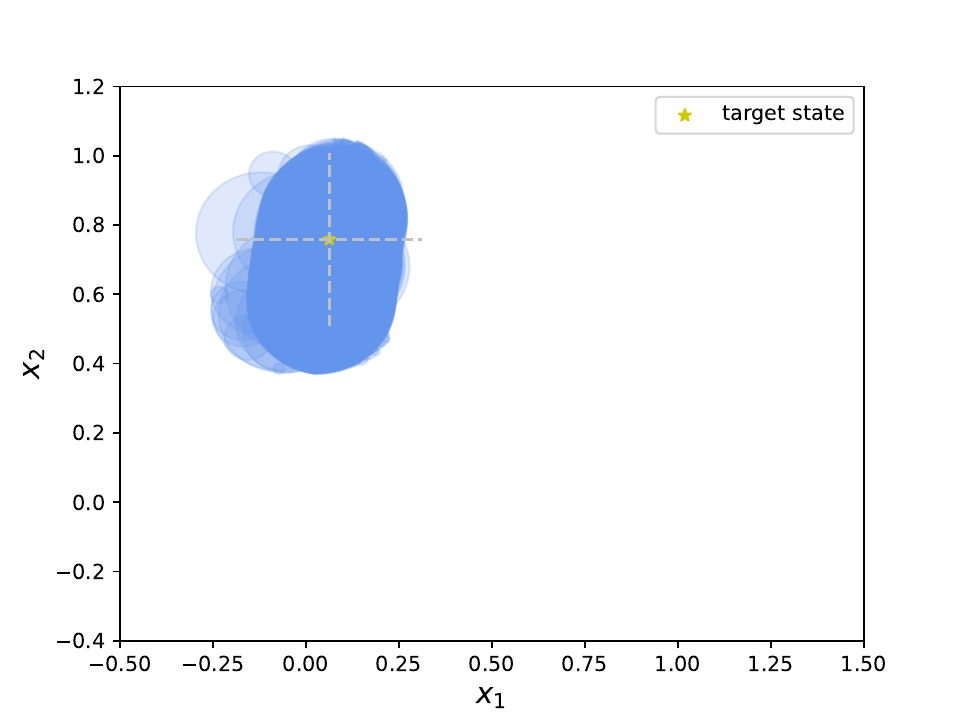}
    }\\
    \subfloat[Step=$400$]{
        \includegraphics[trim=0 5 30 30, clip, width=0.48\linewidth]{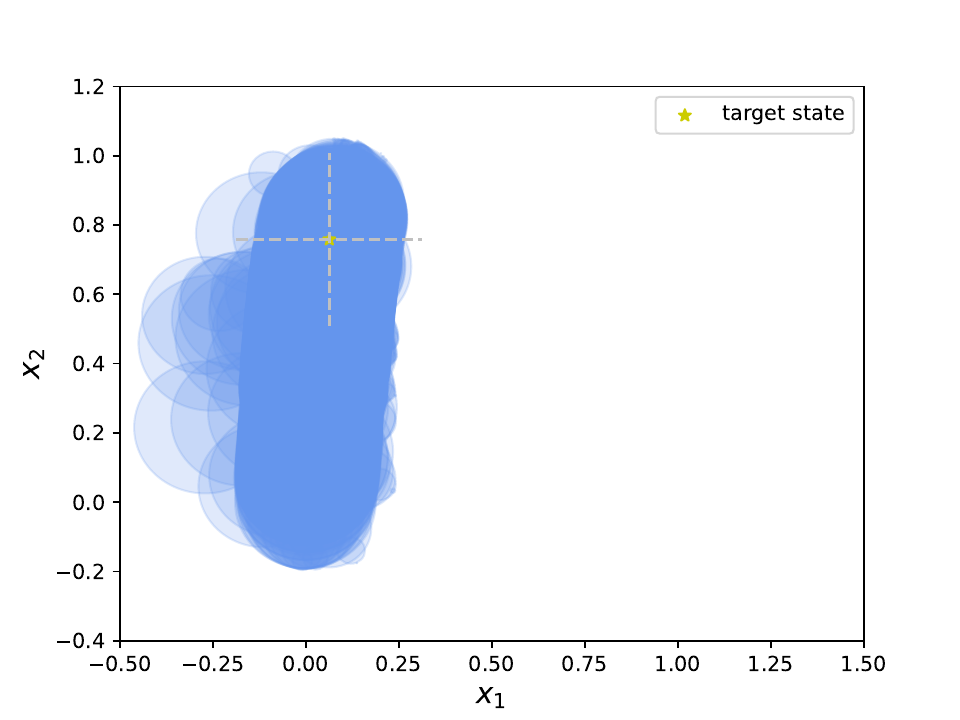}
    }
    \subfloat[Step=$600$]{
        \includegraphics[trim=0 5 30 30, clip, width=0.48\linewidth]{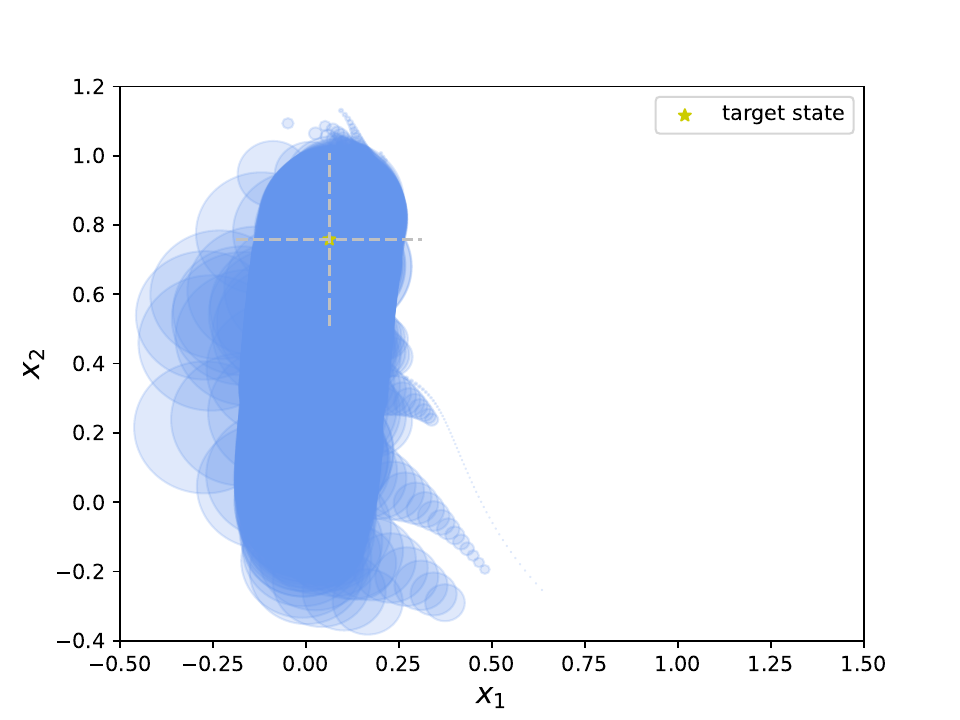}
    }\\
    \caption{The process of $\epsilon$-controllable set expansion in a tunnel-diode circuit when $x_\text{equ1}=[0.063,0.758]$.}
    \label{fig: tunnel expansion left}
\end{center}
\end{figure}

\begin{figure}
\begin{center}
    \subfloat[Step=$1$]{
        \includegraphics[width=0.48\linewidth]{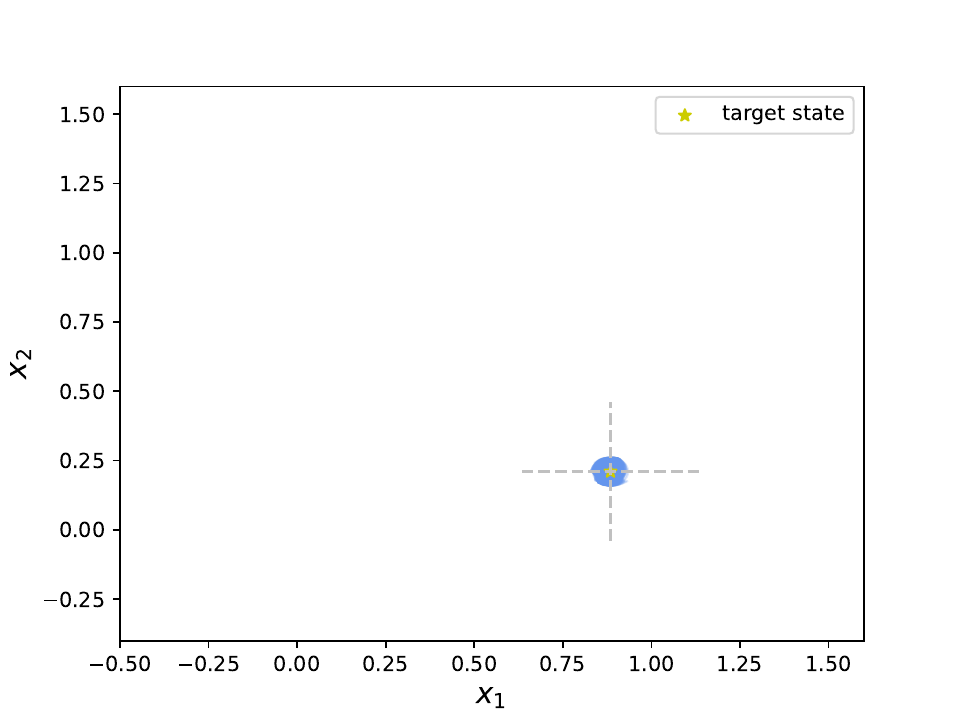}
    }
    \subfloat[Step=$1000$]{
        \includegraphics[width=0.48\linewidth]{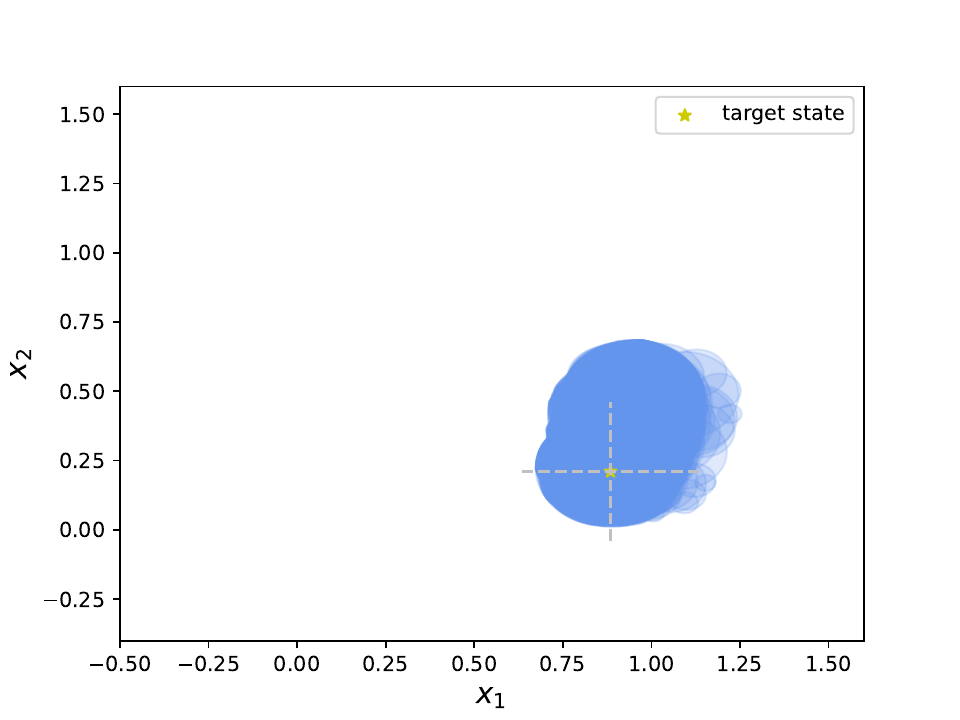}
    }\\
    \subfloat[Step=$2000$]{
        \includegraphics[width=0.48\linewidth]{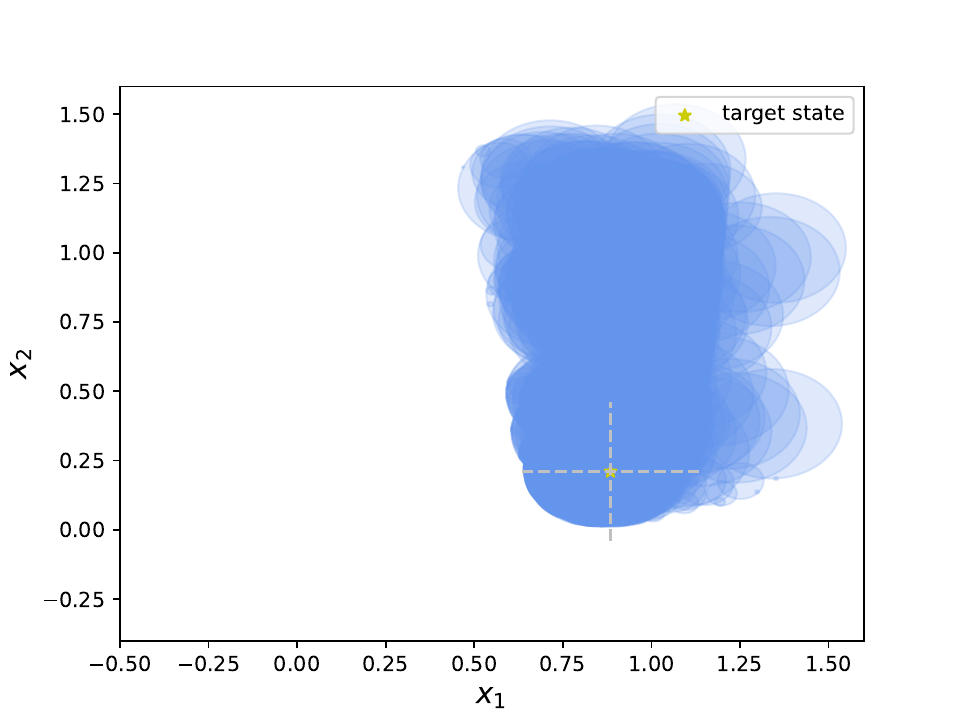}
    }
    \subfloat[Step=$3000$]{
        \includegraphics[width=0.48\linewidth]{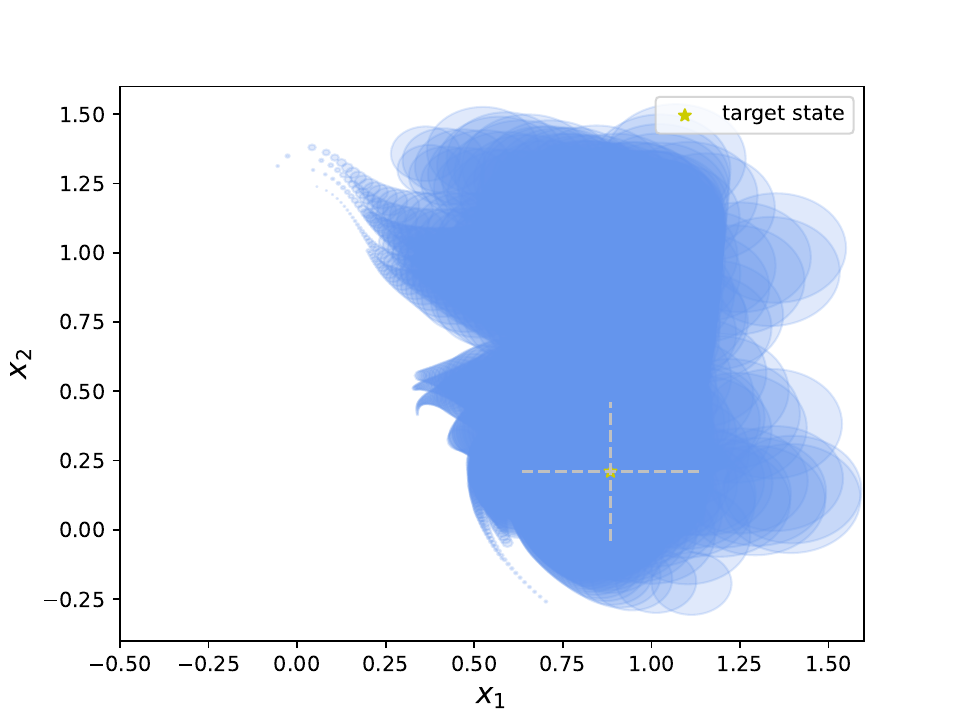}
    }\\
    \caption{The process of $\epsilon$-controllable set expansion in a tunnel-diode circuit when $x_{\mathrm{equ2}}=[0.884,0.21]$.}
    \label{fig: tunnel expansion right}
\end{center}
\end{figure}
\begin{figure}
  \begin{center}
  \subfloat[DOC changes with $\epsilon$]{\includegraphics[trim=10 0 15 10, clip, width=0.48\linewidth]{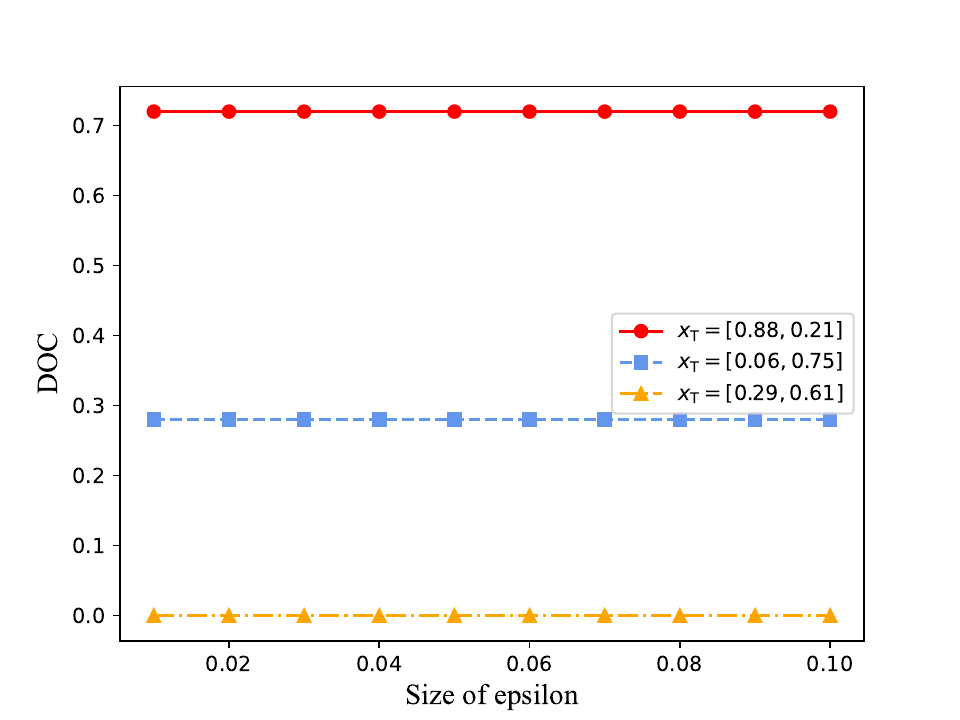}\label{fig: tunnel epsilon a}}
  \subfloat[DOC for different target states]{\includegraphics[trim=20 30 35 30, clip, width=0.48\linewidth]{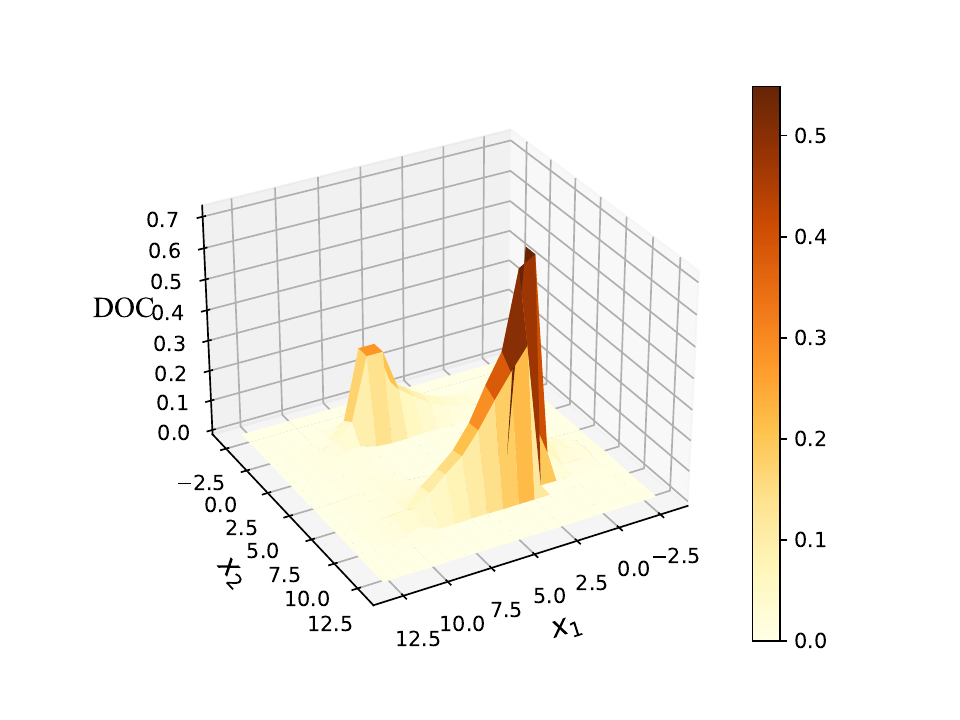}\label{fig: tunnel epsilon b}}
  \caption{The effect of different error radii and target states on DOC in a tunnel-diode circuit.}
  \label{fig: tunnel epsilon}
  \end{center}
\end{figure}

The results for DOC under different target states are visualized in Fig. \ref{fig: tunnel epsilon b}.
It shows that in this system, not all states are $\epsilon$-controllable with respect to a specific equilibrium point.
Only states in a surrounding area of each equilibrium point can be steered to it.
Beyond these areas, DOC is almost zero for any equilibrium point.
The DOCs for the two stable equilibrium points are $0.3$ and $0.7$, which is consistent with the fact that all states can only be steered to their nearest stable equilibrium point.
The final $\epsilon$-controllable subsets for two stable equilibrium points are visualized in Fig. \ref{fig: tunnel_results}.

\begin{figure}[!htbp]
  \begin{center}
    \subfloat[$x_{\text{T}}=x_\text{{equ1}}$]{
        \includegraphics[trim=0 5 30 30, clip,width=0.48\linewidth]{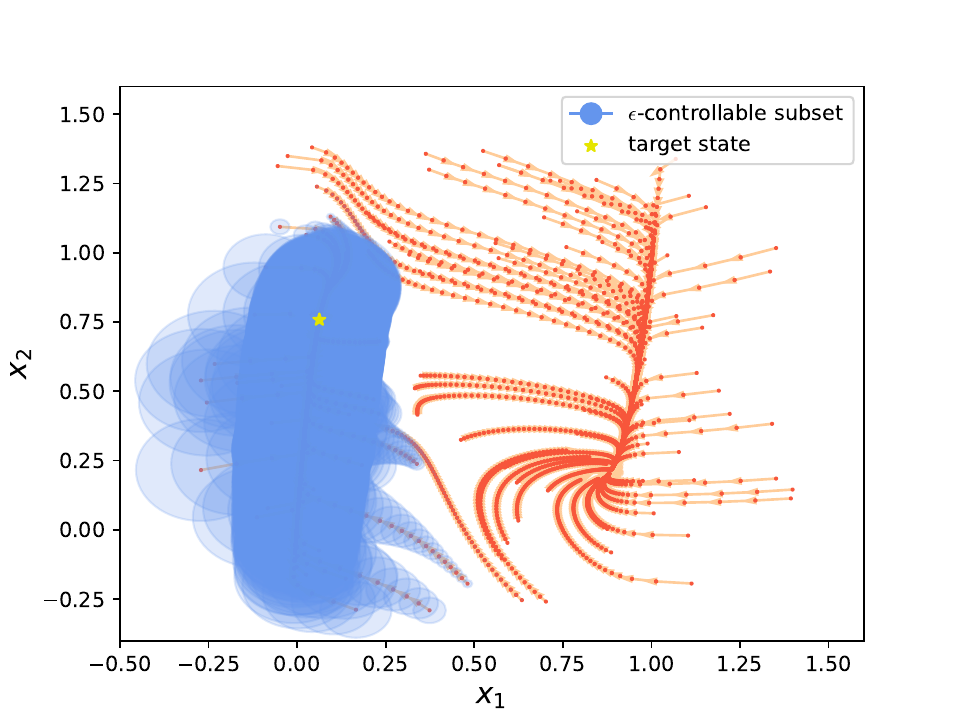}
    }
    \subfloat[$x_{\text{T}}=x_\text{{equ2}}$]{
        \includegraphics[trim=0 5 30 30, clip,width=0.48\linewidth]{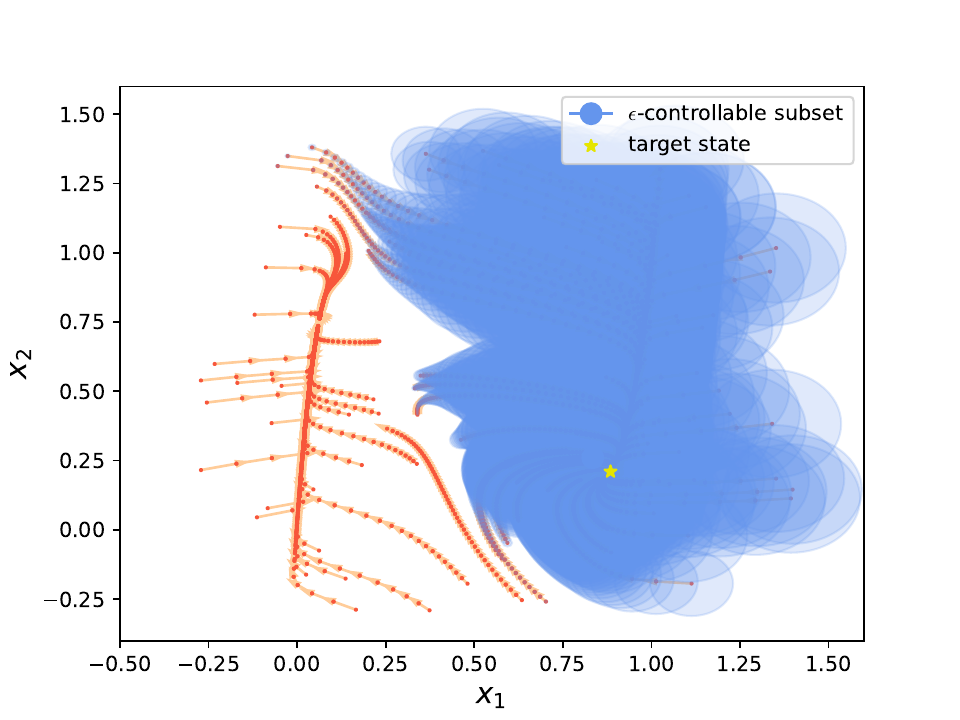}
    }\\
  \caption{The relationship between the final $\epsilon$-controllable subsets and sampled trajectories of a tunnel-diode circuit without control input when target states are $x_\text{equ1}$ and $x_\text{equ2}$.}
  \label{fig: tunnel_results}
  \end{center}
\end{figure}

In conclusion, MECS is applicable to both linear and nonlinear datatic systems.
When the target state coincides with the equilibrium point, the maximum number of $\epsilon$-controllable states is achieved.
The error radius represents a relaxation of traditional exact controllability, allowing states to reach only a neighborhood of the target state.
As the error radius increases, more states become $\epsilon$-controllable.
When the error radius tends to zero, $\epsilon$-controllability degenerates to traditional exact controllability.

\section{Conclusion}
This paper proposes a controllability test method for datatic (i.e., data-driven) control systems by introducing a new concept called $\epsilon$-controllability.
This concept extends the definition of controllability from traditional point-to-point form to a point-to-region form, making it more applicable for dynamic systems whose behaviors are described by a limited number of data points.
By leveraging the Lipschitz continuity assumption to extrapolate unknown state transition from data points, we establish the one-step controllability backpropagation theorem.
This theorem enables the expansion of controllability from a known $\epsilon$-controllable subset to a new one.
Based on this theorem, we propose the maximum expansion of controllable subset (MECS) algorithm to efficiently identify controllable states in nonlinear systems with datatic description.
MECS searches the $\epsilon$-controllable tree by iteratively performing four steps: selection, expansion, evaluation, and pruning until all $\epsilon$-controllable states are found.
To reduce the computational complexity of MECS, we propose a simplified algorithm called Floyd expansion with radius fixed (FERF) based on a mutual controllability assumption of neighboring states.
FERF maintains a fixed radius of all expanded balls and finds $\epsilon$-controllability states by solving a shortest path problem.
In many real-world systems, such as those with image state spaces, it is impractical to verify exact controllability of every individual state element.
Our two algorithms offer an alternative solution to numerically test their controllability, which requires defining specialized distance metrics between states that selectively focus on critical state elements.
The new definition of $\epsilon$-controllability and its testing methods establish a basic theory and provide a practical tool for controllability analysis of datatic control systems.

% Can use something like this to put references on a page
% by themselves when using endfloat and the captionsoff option.
\ifCLASSOPTIONcaptionsoff
  \newpage
\fi

% trigger a \newpage just before the given reference
% number - used to balance the columns on the last page
% adjust value as needed - may need to be readjusted if
% the document is modified later
%\IEEEtriggeratref{8}
% The "triggered" command can be changed if desired:
%\IEEEtriggercmd{\enlargethispage{-5in}}

% ====== REFERENCE SECTION

%\begin{thebibliography}{1}

% IEEEabrv,

\bibliographystyle{IEEEtran}
\bibliography{IEEEabrv,Bibliography}

\vfill

% Can be used to pull up biographies so that the bottom of the last one
% is flush with the other column.
%\enlargethispage{-5in}

% that's all folks
\end{document}